\newcommand{\ifmanuscript}[2]{\ifdefstring{\acmformat}{manuscript}{#1}{#2}}
\newcommand{%
  \begin{figure}
    \centering
    \input{tikz/}
    \ifempty{}{}{\begin{justify}
      \footnotesize\ignorespaces\unskip
    \end{justify}}
    \caption{\ignorespaces\unskip}
    \label{fig:}
  \end{figure}}[3]{%
  \begin{figure}
    \centering
    \input{tikz/#1}
    \ifempty{#3}{}{\begin{justify}
      \footnotesize\ignorespaces#3\unskip
    \end{justify}}
    \caption{\ignorespaces#2\unskip}
    \label{fig:#1}
  \end{figure}}
\newcommand{\figureMathematica}[4][]{%
  \begin{figure}
    \centering
    \includegraphics[#1]{figures/#2.pdf}
    \ifempty{#4}{}{\begin{justify}
      \footnotesize\ignorespaces#4\unskip
    \end{justify}}
    \caption{\ignorespaces#3\unskip}
    \label{fig:#2}
  \end{figure}}
\newcommand{\showDOI}[1]{\unskip}
\newcommand{\showURL}[1]{\unskip}
\begin{document}

\title{How to Schedule Near-Optimally under Real-World Constraints}

\author{Ziv Scully}
\affiliation{%
  \institution{Carnegie Mellon University}
  \department{Computer Science Department}
  \streetaddress{5000 Forbes Ave}
  \city{Pittsburgh}
  \state{PA}
  \postcode{15213}
  \country{USA}}
\email{zscully@cs.cmu.edu}

\author{Mor Harchol-Balter}
\affiliation{%
  \institution{Carnegie Mellon University}
  \department{Computer Science Department}
  \streetaddress{5000 Forbes Ave}
  \city{Pittsburgh}
  \state{PA}
  \postcode{15213}
  \country{USA}}
\email{harchol@cs.cmu.edu}

\begin{abstract}
  Scheduling is a critical part of practical computer systems,
  and scheduling has also been extensively studied from a theoretical perspective.
  Unfortunately, there is a gap between theory and practice,
  as the optimal scheduling policies presented by theory
  can be difficult or impossible to perfectly implement in practice.
  In this work, we use recent breakthroughs in queueing theory to begin to bridge this gap.
  We show how to translate theoretically optimal policies---%
  which provably minimize mean response time (a.k.a. latency)---%
  into \emph{near-optimal} policies that are easily implemented
  in practical settings.
  Specifically, we handle the following real-world constraints:
  \begin{itemize}
  \item
    We show how to schedule in systems
    where job sizes (a.k.a. running times) are unknown, or only partially known.
    We do so using simple policies that achieve performance very close to
    the much more complicated theoretically optimal policies.
  \item
    We show how to schedule in systems
    that have only a limited number of priority levels available.
    We show how to adapt theoretically optimal policies to this constrained setting
    and determine how many levels we need for near-optimal performance.
  \item
    We show how to schedule in systems
    where job preemption can only happen at specific checkpoints.
    Adding checkpoints allows for smarter scheduling, but each checkpoint incurs time overhead.
    We give a rule of thumb that near-optimally balances this tradeoff.
  \end{itemize}
\end{abstract}

\maketitle

\section{Introduction}
\label{sec:intro}

Some form of scheduling is at the heart of every part of every computer system,
and thus there is an extensive literature on scheduling,
from both theoretical
\citep{pinedo_scheduling_2016, gittins_multi-armed_2011, kleinrock_queueing_1976, harchol-balter_performance_2013, wierman_scheduling_2007}
and practical
\citep{arpaci-dusseau_operating_2018, capozzi_downlink_2012, aas_understanding_2005, zhan_cloud_2015}
perspectives.
Scheduling can be very counterintuitive,
and so insights from queueing theory are invaluable
for guiding the design of practical scheduling policies.
However, there is a gap between the idealized models analyzed in queueing theory
and the realities of real-world computer systems.
The goal of this paper is to use recent breakthroughs in queueing theory
to bridge this gap.

\subsection{Optimality in Theory}
\label{sec:intro:theory}

From a theoretical angle, scheduling to minimize mean response time
has been solved in a variety of settings.
In single-server systems where scheduler knows each job's exact size,
the optimal policy is \emph{Shortest Remaining Processing Time} (SRPT),
which always serves the job of minimal remaining size.
SRPT has also been shown to perform well in some multiserver settings
\citep{grosof_srpt_2018, grosof_load_2019, leonardi_approximating_2007}.

Even when job sizes are unknown or only partially known to the scheduler,
theory provides optimal policies for minimizing mean response time,
provided that the \emph{distribution} of job sizes is known.
In the M/G/1 queue, a standard single-server queueing model,
the general optimal policy is called the \emph{Gittins} policy
\citep{gittins_multi-armed_2011, aalto_gittins_2009, aalto_properties_2011}.
Gittins has the same high-level structure as SRPT:
it assigns each job a numerical priority, called its \emph{rank},
and always serves the job of minimal rank.
A job's rank measures, roughly speaking,
how much service the job needs in order to have a good chance of completing.
For example, in the special case of known job sizes,
a job's rank is its remaining size, and thus Gittins is equivalent to SRPT.
But when job sizes are unknown or only partially known,
computing a job's rank is more complicated.

\subsection{Obstacles in Practice}
\label{sec:intro:practice}

Unfortunately, Gittins is often an impractical choice for computer systems.
This is because most systems have to overcome implementation obstacles
that the theory behind Gittins ignores.
\begin{itemize}
\item
  The Gittins policy is unfortunately typically too complicated to use,
  because determining a job's rank requires solving
  a new complex optimization problem at every moment in time.
  In practice, we prefer a simpler scheduling policy.
\item
  Gittins assumes the scheduler has infinitely many priority levels.
  In practice, the scheduler may have only
  a limited number of priority levels to work with.
\item
  Gittins assumes the scheduler can instantly preempt jobs at any time.
  In practice, the scheduler may only be able to preempt jobs at certain times,
  and preemption may not be instant.
\end{itemize}

\subsection{Contributions}
\label{sec:intro:contributions}

This paper is a guide to scheduling in light of practical concerns.
\begin{quote}
  We present \emph{simple scheduling policies}
  that achieve mean response time close to the theoretical optimum.
\end{quote}
Our policies work even in scenarios
with limitations on priority levels and preemption.
We accommodate settings with known job sizes, unknown job sizes, and in between.

We structure our paper as a guide
that walks through the steps of designing and tuning a scheduling policy.
\begin{itemize}
\item
  Our first step is to choose a simple scheduling policy
  to serve as a ``Gittins substitute'' (\cref{sec:substitute}).
  The choice of Gittins substitute depends on system parameters,
  such as the job size distribution
  and the accuracy of job size information available to the scheduler.
\item
  Next, we show how to adapt our chosen Gittins substitute
  to settings with limited priority levels (\cref{sec:lpl}).
  We will answer questions such as
  how many priority levels are necessary
  and how best to use the priority levels one has.
\item
  Finally, we show how to adapt our chosen Gittins substitute
  to settings with limited preemption (\cref{sec:pc}).
  We will answer questions such as
  how frequently one should place preemption checkpoints
  so as to best trade off between enabling frequent preemption
  while avoiding excessive overhead.
\end{itemize}
\Cref{sec:prior_work} reviews prior work,
and \cref{sec:lessons} summarizes the main takeaways of the paper.

\subsection{Techniques}
\label{sec:intro:techniques}

The analysis in this paper is made possible by a recent theoretical breakthrough,
namely the \emph{SOAP framework} \citep{scully_soap_2018}.
SOAP provides the theoretical basis for analyzing the performance of
a wide variety of scheduling policies,
giving an exact formula for mean response time.
We use these formulas to compare the performance of Gittins,
the theoretically optimal policy,
to the performance of simpler policies.
SOAP also enables us to analyze systems
with limitations on priority levels and preemption.
See \citet[Section~3]{scully_soap_2018} for more details
on modeling systems using SOAP.

\subsection{Other Queueing Systems and Performance Metrics}
\label{sec:intro:other_settings}

This paper focuses on the goal of minimizing mean response time
in single-server systems.
We focus on this goal because it is a practical objective
for which theoretically optimal policy is known under very general conditions
(\cref{sec:intro:theory}).
However, we believe the lessons in this paper are likely to generalize well
to minimizing mean response time in multi-server queueing systems.
\begin{itemize}
\item
  In immediate-dispatch systems,
  which have multiple servers each with their own queue,
  one can focus on one server at a time.
\item
  In central-queue systems,
  which have multiple servers all serving jobs from a single queue,
  there is a growing amount of theoretical evidence
  suggesting that policies with optimal or near-optimal mean response time
  in single-server systems
  also have near-optimal mean response time in multiserver systems
  \citep{kalyanasundaram_minimizing_1997, leonardi_approximating_2007,
    becchetti_nonclairvoyant_2004, scully_optimal_2021, grosof_srpt_2019,
    scully_gittins_2020, grosof_load_2019}
\end{itemize}

In some settings,
a job's \emph{slowdown}, which is response time divided by size,
is more important than its response time.
The slowdown metric is suited for settings in which
a 10-second delay to a 100-second job is hardly noticed,
but a 10-second delay to a 1-second job is a major problem.
There is some theory on optimizing mean slowdown
when job sizes are known \citep{hyytia_minimizing_2012}.
We suspect that the lessons in this paper may apply
to minimizing mean slowdown as well as mean response time.

Another very important set of metrics in computer systems
are those related to the tail of the response time distribution.
One such metric is the 99th percentile of response time, or ``T99''.
While the tail of response time is of great importance in practice,
the theory of optimizing the tail is lacking.
It would be valuable to revisit and reevaluate the recommendations in this paper
when the theory of optimizing the tail has matured.

\section{What is Scheduling?}
\label{sec:concepts}

In this section,
we define terminology for queueing systems and scheduling
that we use throughout the paper.

\subsection{Jobs, Servers, and Queues}
\label{sec:concepts:queueing_system}

  \begin{figure}
    \centering
    \begin{tikzpicture}[figure, x=12, y=12]
    \draw (2, 0) circle (2);
    \node[above] at (2, 2) {\normalsize\strut\textsc{Server}};
    \draw[thick, ->] (4, 0) -- ++(0.8, 0);
    \draw (-8.8, -2) -- ++(8.8, 0) -- ++(0, 4) -- ++(-8.8, 0);
    \node[above] at (-4, 2) {\normalsize\strut\textsc{Queue}};
    \draw (0, -2) -- ++(0, 4);
    \draw (-2, -2) -- ++(0, 4);
    \draw (-4, -2) -- ++(0, 4);
    \draw (-6, -2) -- ++(0, 4);
    \draw (-8, -2) -- ++(0, 4);
    \draw[thick, ->] (-10, 0) node [left] {\strut arriving jobs} -- ++(0.8, 0);
    \newcommand{\job}[3]{%
      \fill[cyan!51!white] (#1 - 0.4, -1.5) rectangle ++(0.8, #3);
      \fill[orange!28!white] (#1 - 0.4, -1.5 + #3) rectangle ++(0.8, #2 - #3);
      \ifstrequal{#3}{0}{}{\draw[cyan!48!black] (#1 - 0.4, -1.5 + #3) -- ++(0.8, 0);}
      \draw[orange!48!black] (#1 - 0.4, -1.5) rectangle ++(0.8, #2);
    }
    \job{2}{2}{1.7}
    \job{-1}{2.7}{1.3}
    \job{-3}{2.4}{0}
    \job{-5}{1.3}{0.6}
    \job{-7}{1.8}{1}
    \begin{scope}[shift={(10, 0)}]
      \fill[black!11, rounded corners=1em] (-2.48, -2) rectangle (5.8, 2);

      \job{0}{3}{1.2}
      \node[above] at (1.66, 2) {\normalsize\strut\textsc{Jobs}};
      \draw[decorate, decoration={calligraphic brace, raise=0.25em}]
        (-0.4, -1.5 + 1/16) -- ++(0, 3 - 1/8)
        node [black, midway, left=0.4em] {\strut size};
      \draw[decorate, decoration={calligraphic brace, mirror, raise=0.25em}]
        (0.4, -1.5 + 1/16) -- ++(0, 1.2 - 1/8)
        node [black, midway, right=0.4em] {\strut age (progress)};
      \draw[decorate, decoration={calligraphic brace, mirror, raise=0.25em}]
        (0.4, -1.5 + 1.2 + 1/16) -- ++(0, 3 - 1.2 - 1/8)
        node [black, midway, right=0.4em] {\strut remaining size};
    \end{scope}
\end{tikzpicture}

    \ifempty{}{}{\begin{justify}
      \footnotesize\ignorespaces\unskip
    \end{justify}}
    \caption{\ignorespaces
  A Single-Server Queueing System
\unskip}
    \label{fig:queue}
  \end{figure}

\Cref{fig:queue} illustrates a queueing system.
\emph{Jobs} arrive to the system over time,
each of which carries some amount of work.
The \emph{server} performs this work, serving one job at a time,
and the \emph{queue} holds jobs waiting for service.

A server need not correspond to a single physical server or machine.
Depending on the scale being considered,
a server might be a single core,
a multicore chip,
a single machine,
or even a cluster of machines.
As other examples, a server might be
a virtual machine instance,
a worker thread in a database,
or a top-of-rack network switch.

In this work, we focus on single-server systems,
but we expect many of our lessons to hold in multiserver systems
(\cref{sec:intro:other_settings}).

\subsection{Job Sizes}
\label{sec:concepts:size}

A job's \emph{size} is amount the work associated with the job.
We can think of a job's size as
the amount of time it would take to complete the job's work
if it were served in isolation.
We thus measure size in units of time.

At any given time, the server may have already made some progress on a job.
Analogously to size,
a job's \emph{remaining size} is
the amount of time it would take to complete the job's \emph{remaining} work
if it were served in isolation.
Upon arrival, a job's remaining size is its size,
but remaining size decreases at rate~$1$ during service.
The job completes when its remaining size reaches~$0$.

\subsection{Stochastic Arrivals and Load}
\label{sec:concepts:load}

We study a \emph{stochastic} queueing model,
assume that jobs arrive to the system at random times,
and that each job has a random size.
\begin{itemize}
\item
  The \emph{arrival rate} of jobs, denoted~$\lambda$,
  is the average number of jobs per unit of time that arrive.
\item
  The \emph{job size distribution}, denoted~$S$,
  is the statistical distribution of job sizes.
  For example, the average job size is $\E{S}$.
\end{itemize}
We specifically study the \emph{M/G/1 queueing model},
which makes certain independence assumptions on arrival times and job sizes.\footnote{%
  In the M/G/1,
  the arrival times are the increments of a Poisson process of rate~$\lambda$,
  and each job's size is drawn from distribution~$S$
  independently of arrival times and other jobs' sizes.}

\figureMathematica[height=1.6in]{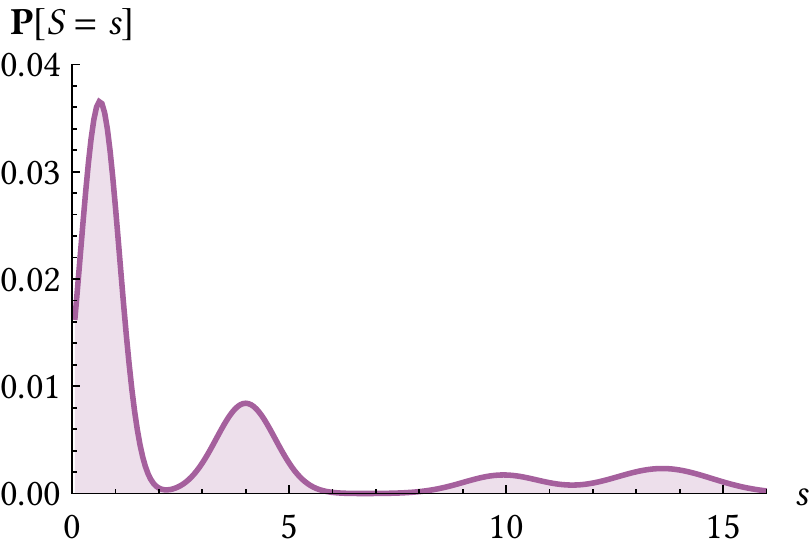}{
  Example Job Size Distribution
}{
  An example of a possible job size distribution~$S$.
  For each possible size~$s$,
  we show the probability that an arriving job's size is~$s$.
  For simplicity, in this example,
  the possible sizes are discretized in increments of $1/16$.
}

Given the arrival rate and job size distribution,
one can compute the \emph{load}, denoted~$\rho$,
which is the fraction of time the server is busy:
\begin{align*}
  \text{load} = \rho = \lambda \E{S}.
\end{align*}
This formula holds because every unit of time,
$\lambda$ jobs arrive on average,
each of which will keep the server busy for $\E{S}$ time on average.
It turns out we need $\rho < 1$ for the system to be stable,
or else work arrives too fast for the server to keep up.
We therefore assume $\rho < 1$ throughout this paper.

\subsection{What Information does the Scheduler Have?}
\label{sec:concepts:settings}

Depending on the setting,
the scheduler may or may not know a job's size or remaining size.
We consider three different settings,
each of which we describe in more detail below.
\begin{itemize}
\item
  \emph{Size-aware:}
  when a job arrives, the scheduler learns its size.
\item
  \emph{Size-oblivious:}
  when a job arrives, the scheduler does not learn any information it.
\item
  \emph{Class-aware:}
  when a job arrives, the scheduler learns some information about it,
  which we call the job's \emph{class},
  but this information does not completely determine its size.
\end{itemize}

\subsubsection{The size-aware setting}
\label{sec:concepts:settings:size-aware}

By definition, the scheduler knows each job's size in this setting.
We also assume that the server can measure
how much time a job has been served so far,
which we call the job's \emph{age}.
The scheduler can compute remaining size from size and age:
\begin{align*}
  \text{remaining size} = \text{size} - \text{age}.
\end{align*}

\subsubsection{The size-oblivious setting}
\label{sec:concepts:settings:size-oblivious}

It may seem like the scheduler
has no useful information at all in this setting.
However, there are two important types of information available.

First, we assume that the scheduler knows each job's age.
This is a reasonable assumption because measuring time served so far

Second, we assume that the scheduler knows the \emph{job size distribution}.
This is a reasonable assumption because even when sizes are unknown,
at the moment a job completes, its age and size are equal,
so we learn job sizes after the fact.
We can thus infer the size distribution from past data.
We denote the size distribution by the random variable~$S$.
Knowing the size distribution~$S$ amounts to knowing $\P{S > x}$,
the probability a job has size greater than~$x$,
for all $x \geq 0$.

We assume the size distribution is also known to the scheduler
in the size-aware and class-aware settings,
but its role in those settings is less important.

\subsubsection{The class-aware setting}
\label{sec:concepts:settings:class-aware}

This setting is an intermediate between the size-aware and size-oblivious settings.
We assume that there are a number of \emph{job classes}.
Each job has one class, and the scheduler knows each job's class.
A job's class models any ``static'' information known about the job:
the user that submitted the job, a priority level, a size estimate, or similar.
One can in theory have infinitely many job classes,
but for simplicity, we only consider systems with a small finite number of classes.

Just like the size-oblivious setting, in the class-aware setting,
the scheduler knows job ages and the overall job size distribution~$S$.
In addition, for each class~$k$,
the scheduler knows the \emph{class-specific job size distribution},
which we denote by~$S_k$.
One can think of $S_k$ as the random variable representing
a job's size conditional on knowing that its class is~$k$.

\subsection{Scheduling: Deciding Which Job to Serve}
\label{sec:concepts:scheduling}

A \emph{scheduling policy} is an algorithm that the scheduler uses
to decide which job to serve at every moment in time.
We can classify scheduling policies into two broad categories:
\begin{itemize}
\item
  \emph{Nonpreemptive} policies never preempt, meaning interrupt,
  jobs once they start service.
\item
  \emph{Preemptive} policies sometimes preempt jobs during service,
  pausing them temporarily to switch to serving a different job.
\end{itemize}

Two examples of scheduling policies
are \emph{First Come, First Serve} (FCFS)
and \emph{Shortest Remaining Processing Time} (SRPT).
FCFS is nonpreemptive, while SRPT is preemptive.
We generally assume that preemptions occur instantly
and that preempted jobs keep their progress toward completion.

\subsubsection{Describing scheduling policies with rank functions}
\label{sec:concepts:scheduling:rank}

The space of scheduling policies is vast.
To narrow our focus,
we consider a certain class of dynamic-priority scheduling policies
\citep{scully_soap_2018}.

\begin{itemize}
\item
  A job's \emph{rank} is a non-negative real number representing a job's priority,
  where lower rank means better priority.
\item
  A \emph{rank function} is a function that determines a job's rank
  using only its age and, if available, size or class:
  \begin{align*}
    \text{rank function} : \text{age, size (if known), class (if known)} \to \text{rank}.
  \end{align*}
\end{itemize}
Nearly every scheduling policy mentioned in this paper
uses a rank function to assign each job a rank,
then follows one core scheduling rule:
\begin{quote}
  \emph{\textbf{Always serve the job of minimum rank.}
  When necessary, break ties in FCFS order.}
\end{quote}

  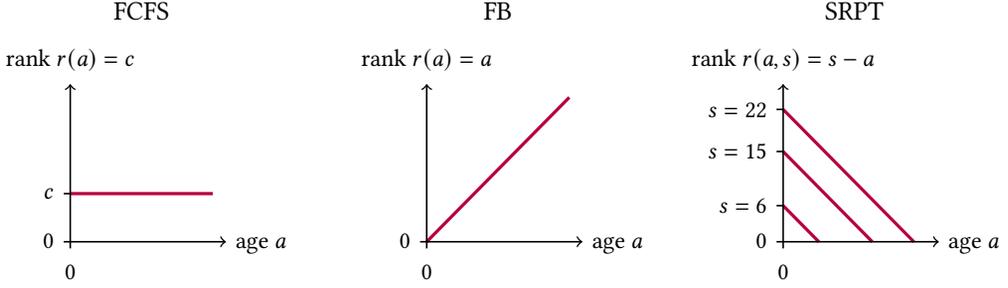
\begin{figure}
    \centering
    \begin{tikzpicture}[figure]
  \begin{scope}[shift={(-15,0)}]
    \node at (6/2, 9.5) {\normalsize\strut\textsc{FCFS}};

    \yguide[$c$]{0}{2}

    \draw[primary]
    (0, 2) -- (6, 2);

    \axes{6}{6}{$0$}{age~$a$}{$0$}{rank $r(a) = c$}
  \end{scope}

  \begin{scope}[shift={(15,0)}]
    \node at (6/2, 9.5) {\normalsize\strut\textsc{SRPT}};

    \yguide[$s = 22$]{0}{5.5}
    \yguide[$s = 15$]{0}{3.75}
    \yguide[$s = 6$]{0}{1.5}

    \draw[primary]
    (5.5, 0) -- (0, 5.5);
    \draw[primary]
    (3.75, 0) -- (0, 3.75);
    \draw[primary]
    (1.5, 0) -- (0, 1.5);

    \axes{6}{6}{$0$}{age~$a$}{$0$}{rank $r(a, s) = s - a$}
  \end{scope}

  \begin{scope}[shift={(0,0)}]
    \node at (6/2, 9.5) {\normalsize\strut\textsc{FB}};

    \draw[primary]
    (0, 0) -- (6, 6);

    \axes{6}{6}{$0$}{age~$a$}{$0$}{rank $r(a) = a$}
  \end{scope}

  \node at (-21.5 + 6/2, 0) {};
  \node at (21.5 + 6/2, 0) {};
\end{tikzpicture}

    \ifempty{}{}{\begin{justify}
      \footnotesize\ignorespaces\unskip
    \end{justify}}
    \caption{\ignorespaces
  Examples of Rank Functions
\unskip}
    \label{fig:rank_examples}
  \end{figure}

We can express a wide variety of scheduling policies in this framework
by simply varying the rank function
\citep{scully_soap_2018}.
\Cref{fig:rank_examples} illustrates the rank functions of the following policies:
\begin{itemize}
\item
  Under FCFS, a job's rank is a fixed constant~$c$
  at all ages~$a$:
  \begin{align*}
    \text{rank}_{\text{FCFS}}(a) = c.
  \end{align*}
  This means tiebreaking rule is always in effect,
  so jobs are served in FCFS order.
\item
  Under a policy called \emph{foreground-background} (FB), a job's rank is its age~$a$:
  \begin{align*}
    \text{rank}_{\text{FB}}(a) = a.
  \end{align*}
  This means the scheduler always serves the job of least age,
  namely the job that has received the least service so far.\footnote{%
    When multiple jobs are tied for least age,
    FB shares the server between them.
    This is a natural consequence of always serving the job of least rank,
    which in FB's case is least age.
    For example, when two jobs $J$ and~$K$ are tied for least age,
    if FB serves $J$ for an instant, then $J$'s age increases,
    so FB switches to serving~$K$.}
\item
  Under SRPT, a job's rank is its size~$s$ minus its age~$a$:
  \begin{align*}
    \text{rank}_{\text{SRPT}}(a, s) = s - a.
  \end{align*}
  This means the scheduler always serves the job of least remaining size.
\end{itemize}

\subsection{Response Time}

A job's \emph{response time} is the amount of time between
its arrival and its completion.
This work focuses on minimizing \emph{mean response time},
which we write as $\E{T}$.
Mean response time is impacted by three parameters:
\begin{itemize}
\item
  The scheduling policy.
\item
  The job size distribution~$S$,
  including the class-specific job size distributions~$S_k$
  in the class-aware setting.
\item
  The arrival rate~$\lambda$, or equivalently load $\rho = \lambda \E{S}$.
\end{itemize}

Of these three parameters, the simplest one to understand is load~$\rho$,
namely the fraction of time the server is busy.
In the $\rho \to 0$ limit, the system is almost always empty,
so there is almost no queueing and $\E{T} \to \E{S}$.
In the $\rho \to 1$ limit, the system barely keeps up with demand,
and we actually have $\E{T} \to \infty$.
In general, $\E{T}$ is a strictly increasing function of load~$\rho$.\footnote{%
  Here and throughout this paper,
  when we talk about varying the load $\rho = \lambda \E{S}$,
  we mean that the arrival rate~$\lambda$ varies
  while the job size distribution~$S$ remains fixed.}

\figureMathematica{fcfs_vs_srpt}{
  Effect of Variability on Response Time of FCFS and SRPT
}{
  Mean response time of FCFS and SRPT for two job size distributions,
  one with low variability, namely $C^2_S = 1$ (left),
  and one with high variability, namely $C^2_S = 100$ (right).
  The $C^2_S = 1$ job size distribution
  is an exponential distribution with mean~$1$.
  The $C^2_S = 100$ job size distribution
  is a two-phase hyperexponential distribution with mean~$1$
  and ``balanced means'',
  meaning the probabilities and rates of each phase are proportional.
}

The effect of the job size distribution~$S$ on mean response time
is more complicated, as it depends on the scheduling policy.
As a simple example, \cref{fig:fcfs_vs_srpt} shows how
the squared coefficient of variation $C^2_S = \Var{S}/\E{S}^2$,
which measures the variability of~$S$
(greater $C^2_S$ means more variable),
affects the mean response time of FCFS and SRPT.
\begin{itemize}
\item
  FCFS's mean response time increases with~$C^2_S$.
  In fact, one term of FCFS's mean response time formula
  is directly proportional to $C^2_S$.
  The reason for this is the so-called ``inspection paradox'',
  which states that when a job arrives,
  the job currently in service is disproportionately likely to be a large job.
  This means that for high-variability job size distributions,
  under FCFS,
  many small jobs can get stuck behind one very large job,
  which increases mean response time.
\item
  SRPT's mean response time is much less sensitive to~$C^2_S$.
  In fact, SRPT's mean response time is slightly \emph{lower}
  in the high-variability case.
  This is because when SRPT is given a greater range of job sizes,
  there are more opportunities to prioritize very small jobs ahead of larger jobs.
\end{itemize}

In general,
the impact of the scheduling policy and job size distribution on mean response time
has no simple characterization.
Even for a fixed job size distribution,
whether one policy outperforms another can depend on the load.
See \cref{fig:example_rts} for an example of this.

\section{Creating a Gittins Substitute}
\label{sec:substitute}

The Gittins policy minimizes mean response time.
The form that Gittins takes depends on the setting
\citep{gittins_multi-armed_2011}.
\begin{itemize}
\item
  In the size-aware setting, Gittins reduces to SRPT.
\item
  In the size-oblivious setting,
  Gittins uses the following rank function,
  which takes into account the job size distribution~$S$:
  \begin{equation}
    \label{eq:gittins_rank}
    \text{rank}_{\text{Gittins}}(a)
    = \inf_{b > a} \frac{\E{\min\{S - a, b\} \given S > a}}{\P{S \leq b \given S > a}}.
  \end{equation}
\item
  In the class-aware setting,
  Gittins has a rank function similar to~\cref{eq:gittins_rank} with one change:
  when computing the rank for a job of class~$k$,
  we use the class-specific job size distribution~$S_k$ in place of~$S$
  (\cref{sec:substitute:class-aware}).
\end{itemize}
Looking at a high level,
the Gittins rank function trades off between two ideas.
The first is that we want to favor jobs whose expected remaining size is small,
which is captured by the numerator of~\cref{eq:gittins_rank}.
The second is that we want to favor jobs that are likely to complete soon,
which is captured by the denominator of~\cref{eq:gittins_rank}.
\Cref{fig:hills_size-oblivious} shows what the Gittins rank function looks like
for the job size distribution from \cref{fig:example_dist}.

  \begin{figure}
    \centering
    \begin{tikzpicture}[figure]
  \begin{scope}[shift={(-10,0)}]
    \node at (13/2, 9) {\normalsize\strut\textsc{Gittins}};

    \xguide[$5$]{5/16 * 13}{0}
    \xguide[$10$]{10/16 * 13}{0}
    \xguide[$15$]{15/16 * 13}{0}

    \yguide[$2$]{0}{2/16 * 13}
    \yguide[$4$]{0}{4/16 * 13}
    \yguide[$6$]{0}{6/16 * 13}

    \begin{scope}[scale=13]
      \draw[primary] \gittinsNoInfo;
    \end{scope}

    \axes{13}{5.5}{$0$}{age}{$0$}{rank}
  \end{scope}

  \begin{scope}[shift={(10,0)}]
    \node at (13/2, 9) {\normalsize\strut\textsc{SERPT}};

    \xguide[$5$]{5/16 * 13}{0}
    \xguide[$10$]{10/16 * 13}{0}
    \xguide[$15$]{15/16 * 13}{0}

    \yguide[$2$]{0}{2/16 * 13}
    \yguide[$4$]{0}{4/16 * 13}
    \yguide[$6$]{0}{6/16 * 13}

    \begin{scope}[scale=13]
      \draw[primary] \serptNoInfo;
    \end{scope}

    \axes{13}{5.5}{$0$}{age}{$0$}{rank}
  \end{scope}

  \node at (-21.5 + 13/2, 0) {};
  \node at (21.5 + 13/2, 0) {};
\end{tikzpicture}

    \ifempty{
  We show the rank functions of Gittins and SERPT
  for the job size distribution shown in \cref{fig:example_dist}.
  Similarly to the distribution,
  both rank functions are discretized in age increments of~$1/16$.
}{}{\begin{justify}
      \footnotesize\ignorespaces
  We show the rank functions of Gittins and SERPT
  for the job size distribution shown in \cref{fig:example_dist}.
  Similarly to the distribution,
  both rank functions are discretized in age increments of~$1/16$.
\unskip
    \end{justify}}
    \caption{\ignorespaces
  Rank Functions of Gittins and SERPT in the Size-Oblivious Setting
\unskip}
    \label{fig:hills_size-oblivious}
  \end{figure}

As shown by \cref{fig:example_rts},
Gittins can in general perform significantly better
than simple size-oblivious policies like FCFS and FB.
Unfortunately, Gittins is a complicated policy.
Implementing Gittins requires solving an optimization problem at every age~$a$,
and these optimization problems are hard to solve in general
\citep[Appendix~B]{scully_simple_2020}.

Fortunately, most of the time,
there are simpler policies that perform nearly as well as Gittins.
The main obstacle is that, like Gittins,
we need to adapt the policy to the job size distribution
and the information known by the scheduler.
In the remainder of this section,
we provide simple Gittins substitutes for the
size-aware (\cref{sec:substitute:size-aware}),
size-oblivious (\cref{sec:substitute:size-oblivious}),
and class-aware (\cref{sec:substitute:class-aware}) settings.

\subsection{Near-Optimal Scheduling in the Size-Aware Setting}
\label{sec:substitute:size-aware}

In the size-aware setting,
Gittins actually takes a reasonably simple form:
it is equivalent to SRPT, which we know always minimizes mean response time
\citep{schrage_proof_1968}.

\subsubsection{When are nonpreemptive policies good enough?}
\label{sec:substitute:size-aware:nonpreemptive}

With this said, there is one aspect of SRPT
that can be an obstacle to implementing it in some systems:
it is a preemptive policy.
This prompts a question: can we achieve performance comparable to SRPT
without any preemptions?

For job size distributions~$S$ with
low squared coefficient of variation $C_S^2 = \Var{S}/\E{S}^2$,
a nonpreemptive version of SRPT
called \emph{Shortest Job First}~(SJF)
has mean response time comparable to that of SRPT.
SJF never preempts a job during service,
but whenever a job completes,
SJF selects the shortest job to serve next.

In fact, for distributions with especially low variance,
even FCFS can be competitive with SRPT.
For example, if the minimum possible job size is~$s_{\min}$
and the maximum job size is~$s_{\max}$,
then FCFS's mean response time is at most $s_{\max}/s_{\min}$ times SRPT's.
We prove this interesting fact in \cref{sec:fcfs_approximation}.
In practice, FCFS still performs well
when most but not all jobs fall between $s_{\min}$ and~$s_{\max}$,
such as when $S$ is Gaussian.

\subsection{Near-Optimal Scheduling in the Size-Oblivious Setting}
\label{sec:substitute:size-oblivious}

\figureMathematica{example_rts}{
  Response Times of Different Scheduling Policies for Example Job Size Distribution
}{
  Mean response times of several scheduling policies
  for the job size distribution from \cref{fig:example_dist}.
  We show both mean response times (left)
  and the mean response time ratios relative to Gittins (right),
  which is the policy that minimizes mean response time
  in the size-oblivious setting.
  In this example, FB outperforms FCFS at low load,
  but FCFS is the better choice at high load.
  However, neither FCFS nor FB performs nearly as well as Gittins or SERPT.
  As expected, SRPT performs even better than Gittins,
  because it uses job size information to make smarter scheduling decisions.
}

In the size-oblivious setting, as we saw in \cref{eq:gittins_rank},
Gittins is impractically complicated to implement
for general job size distributions~$S$.
We thus recommend two different strategies for creating a Gittins substitute.
Both strategies make use of $S$ in a much simpler way than Gittins.

\subsubsection{SERPT: using expected remaining size}

Suppose a job has reached age~$a$ but hasn't finished yet.
If we do not know the job's size, we cannot compute its remaining size.
However, we can compute its \emph{expected} remaining size using $a$ and~$S$:
\begin{align*}
  \text{expected remaining size of a job at age~$a$} = \E{S - a \given S > a}.
\end{align*}
Inspired by SRPT, which prioritizes the job of least remaining size,
we might think to prioritize the job of least \emph{expected} remaining size.
This is what the \emph{Shortest Expected Remaining Processing Time} (SERPT) policy does.
That is, SERPT's rank function is
\begin{align*}
  \text{rank}_{\text{SERPT}}(a) = \E{S - a \given S > a}.
\end{align*}

SERPT's rank function is thus considerably simpler to compute than Gittins's.
Despite this, SERPT's rank function often looks qualitatively similar to Gittins's.
For example, \cref{fig:hills_size-oblivious} shows that
Gittins and SERPT have similar rank functions
for the job size distribution from \cref{fig:example_dist}.
This suggests that SERPT may be a good Gittins substitute,
and there is theoretical evidence supporting this \citep{scully_simple_2020}.
This is certainly true for the job size distribution from \cref{fig:example_dist}:
in \cref{fig:example_rts},
SERPT's mean response time is within 3\% of Gittins's at all loads.
In other numerical experiments (\cref{sec:substitute:worst}),
we have observed near-optimal performance from SERPT
on a wide variety of job size distributions.\footnote{%
  It is an open problem to prove a general bound on SERPT's mean response time
  relative to Gittins's.
  The worst-case scenario for SERPT that has been found to date
  is a pathological job size distribution constructed by
  \citet[Section~7]{scully_simple_2020}.
  Even in this worst-case scenario,
  SERPT does only twice as badly as Gittins,
  whereas FCFS and FB do even worse.}

\subsubsection{Further simplifying SERPT}
\label{sec:substitute:size-oblivious:further_simplify}

While SERPT is already significantly simpler than Gittins,
it is often possible to get away with even simpler scheduling policies.
We begin by observing that for some job size distributions~$S$,
SERPT already reduces to a simple scheduling policy.
\begin{itemize}
\item
  Suppose a job's expected remaining size is largest before it starts service.
  Then SERPT prioritizes a job that has started service
  over any job that has not yet begun service,
  so SERPT reduces to FCFS.
\item
  Suppose a job's expected remaining size strictly increases as it runs.
  Then SERPT prioritizes the job of least age,
  so SERPT reduces to FB.
\end{itemize}
In these scenarios, there is no need to further simplify SERPT.
However, the conditions on~$S$, particularly the latter,
are somewhat restrictive.
Fortunately, our case studies show that
even if $S$ only approximately satisfies one of the conditions,
FCFS or FB has performance comparable to SERPT and Gittins.

\subsection{Near-Optimal Scheduling in the Class-Aware Setting}
\label{sec:substitute:class-aware}

\figureMathematica{sys_dists}{
  Size Distributions of Jobs from Four Applications
}{}

In the class-aware setting, Gittins has the following rank function,
which takes into account a job's age and class.
The rank of a job of class~$k$ at age~$a$ is
\begin{align*}
  \text{rank}_{\text{Gittins}}(a, k)
  = \inf_{b > a} \frac{\E{\min\{S_k - a, b\} \given S_k > a}}{\P{S_k \leq b \given S_k > a}}.
\end{align*}
This is the same formula as~\cref{eq:gittins_rank},
the rank function for the size-oblivious setting,
but it replaces the job size distribution~$S$
with the \emph{class-specific} job size distribution~$S_k$.
This means that in the class-aware setting,
much like in the size-oblivious setting,
Gittins is impractically complicated to implement in general.
Fortunately, as we will soon see, a class-aware SERPT policy
still serves as a good Gittins substitute.

\subsubsection{Running example: a family of class-aware systems}
\label{sec:substitute:class-aware:example}

To compare different class-aware scheduling policies,
we consider the following running example.
Consider a system serving jobs from four applications,
which we call A, B, C, and~D.
Suppose that each application's job size distribution
is as shown in \cref{fig:sys_dists}.
That is, if we have one job from each application,
it is likely that their size ordering from smallest to largest
is $\text{A} < \text{B} < \text{C} < \text{D}$.

In our running example,
we consider three ways of splitting the four applications into two classes.
\begin{itemize}
\item
  \emph{System~1122:}
  class~1 is A and~B, and class~2 is C and~D.
\item
  \emph{System~1212:}
  class~1 is A and~C, and class~2 is B and~D.
\item
  \emph{System~1221:}
  class~1 is A and~D, and class~2 is B and~C.
\end{itemize}
For example, in System~1212,
the scheduler knows that a class~2 job comes from application B or~D.

\subsubsection{Generalizing SERPT to the class-aware setting}

  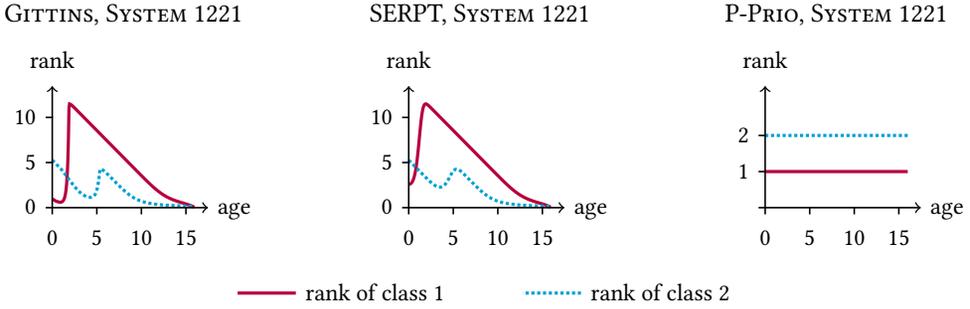
\begin{figure}
    \centering
    \begin{tikzpicture}[figure]
  \begin{scope}[shift={(-15,0)}]
    \node at (6/2, 8) {\normalsize\strut\textsc{Gittins, System~1221}};

    \xguide[$5$]{5/16 * 6}{0}
    \xguide[$10$]{10/16 * 6}{0}
    \xguide[$15$]{15/16 * 6}{0}

    \yguide[$5$]{0}{5/16 * 6}
    \yguide[$10$]{0}{10/16 * 6}

    \begin{scope}[scale=6]
      \draw[secondary] \gittinsAbbaA;
      \draw[tertiary] \gittinsAbbaB;
    \end{scope}

    \axes{6}{4.5}{$0$}{age}{$0$}{rank}
  \end{scope}

  \begin{scope}[shift={(0,0)}]
    \node at (6/2, 8) {\normalsize\strut\textsc{SERPT, System~1221}};

    \xguide[$5$]{5/16 * 6}{0}
    \xguide[$10$]{10/16 * 6}{0}
    \xguide[$15$]{15/16 * 6}{0}

    \yguide[$5$]{0}{5/16 * 6}
    \yguide[$10$]{0}{10/16 * 6}

    \begin{scope}[scale=6]
      \draw[secondary] \serptAbbaA;
      \draw[tertiary] \serptAbbaB;
    \end{scope}

    \axes{6}{4.5}{$0$}{age}{$0$}{rank}
  \end{scope}

  \begin{scope}[shift={(15,0)}]
    \node at (6/2, 8) {\normalsize\strut\textsc{P-Prio, System~1221}};

    \xguide[$5$]{5/16 * 6}{0}
    \xguide[$10$]{10/16 * 6}{0}
    \xguide[$15$]{15/16 * 6}{0}

    \yguide[$1$]{0}{1.5}
    \yguide[$2$]{0}{3}

    \draw[secondary] (0, 1.5) -- (6, 1.5);
    \draw[tertiary] (0, 3) -- (6, 3);

    \axes{6}{4.5}{}{age}{$0$}{rank}
  \end{scope}

  \draw[secondary] (-6 - 1.75 + 6/2, -3.55) -- ++(-2.45, 0);
  \draw[tertiary] (6 - 1.75 + 6/2, -3.55) -- ++(-2.45, 0);
  \node[right] at (-6 - 1.75 + 6/2, -3.5) {rank of class~1};
  \node[right] at (6 - 1.75 + 6/2, -3.5) {rank of class~2};

  \node at (-21.5 + 6/2, 0) {};
  \node at (21.5 + 6/2, 0) {};
\end{tikzpicture}

    \ifempty{}{}{\begin{justify}
      \footnotesize\ignorespaces\unskip
    \end{justify}}
    \caption{\ignorespaces
  Rank Functions in the Class-Aware Setting, System 1221
\unskip}
    \label{fig:hills_class-aware}
  \end{figure}

We have seen that in the size-oblivious setting,
SERPT can serve as a substitute for Gittins.
Recall that SERPT always serves the job of least expected remaining size.
We can use SERPT in the class-aware setting.
The rank of a job of class~$k$ at age~$a$ is
\begin{align*}
  \text{rank}_{\text{SERPT}}(a, k)
  &= \text{expected remaining size of a class~$k$ job at age~$a$} \\
  &= \E{S_k - a \given S_k > a}.
\end{align*}

It turns out that SERPT is a good Gittins substitute in the class-aware setting.
In all of the class-aware systems
described in \cref{sec:substitute:class-aware:example},
SERPT maintains mean response time within 12\% of Gittins's,
as shown by \cref{fig:sys_comparison}.
This is because, as in the size-oblivious setting,
SERPT's rank function tends to look qualitatively similar to Gittins's,
\Cref{fig:hills_class-aware} demonstrates this for System~1221.
In this case,
both policies initially assign class~1 jobs low rank,
because most class~1 jobs in System~1221 are from application~A.
However, once a class~1 job has run for long enough,
it becomes most likely that it is from application~D,
meaning it is likely to be large,
so both policies' rank functions increase accordingly.

\subsubsection{Can we further simplify class-aware SERPT?\nopunct}
\label{sec:substitute:class-aware:further_simplify}

\newcommand{\figSysComparison}{\figureMathematica{sys_comparison}{
  Response Time Comparison in Class-Aware Setting
}{}}

\ifmanuscript{\figSysComparison}{}

In the size-oblivious setting,
we saw that we could often maintain good performance
with policies even simpler than SERPT,
which itself is already much simpler than Gittins.
It is thus natural to ask
whether we can further simplify SERPT in the class-aware setting.

Perhaps the simplest scheduling policy that uses class information
is the \emph{Preemptive Priority} (P-Prio) policy.
Given an ordering of the classes from best to worst priority,
P-Prio always serves the job in the best possible priority class,
serving jobs within each class in FCFS order.
One can make P-Prio favor short jobs by ordering the classes
in order of expected job size.
Can we use P-Prio as a Gittins substitute?

To answer this question,
we compare P-Prio to SERPT and Gittins
in the class-aware systems
described in \cref{sec:substitute:class-aware:example}.
In all cases, P-Prio prioritizes class~1 over class~2.
P-Prio sometimes performs comparably to SERPT and Gittins,
but sometimes SERPT and Gittins are clearly superior.
\begin{itemize}
\item
  In System~1122,
  all three of P-Prio, SERPT, and Gittins have very similar mean response time.
  This makes sense because in this system,
  the scheduler knows a class~1 job is smaller than a class~2 job
  with high probability,
  so strictly prioritizing class~1 jobs is an excellent heuristic.
\item
  In System~1212,
  P-Prio has worse mean response time than SERPT and Gittins,
  but only by 10--15\%.
  This makes sense because in this system,
  the scheduler knows a class~1 job is more likely than not
  to be smaller than a class~2 job.
  However, exceptions occur frequently enough that
  strictly prioritizing class~1 jobs is not as effective as in System~1122.
\item
  In System~1221, SERPT and Gittins significantly outperform P-Prio.
  In fact, using P-Prio turns out to be even worse than using
  the size-oblivious version of SERPT or Gittins.
  This makes sense because in this system,
  given a class~1 job and a class~2 job,
  it is not clear to the scheduler which is larger.
  P-Prio prioritizes class~1 over class~2 because
  class~1 jobs are smaller on average.
  However, once a class~1 job reaches a large enough age,
  its remaining size is likely to be large.
  As shown in \cref{fig:hills_class-aware},
  SERPT and Gittins deal with this by updating a class~1 job's rank during service,
  but P-Prio is limited by its static priorities.
\end{itemize}

\ifmanuscript{}{\figSysComparison}

\subsection{Additional Examples}
\label{sec:substitute:worst}

Through our discussion of the size-oblivious (\cref{sec:substitute:size-oblivious})
and class-aware (\cref{sec:substitute:class-aware}) settings,
we have focused on the scenario shown in \cref{fig:sys_dists}.
Our primary finding is that SERPT is near-optimal
in the size-oblivious and class-aware settings.
However, one might worry that this finding is specific to
the distribution in \cref{fig:sys_dists}.

To test whether SERPT is near-optimal under broader conditions,
we repeated this section's analysis
for 100 different randomly generated scenarios
with the goal of finding the \emph{worst case scenario} for SERPT.
Each scenario was similar to \cref{fig:sys_dists}
in that the overall job size distribution is a mixture of jobs from four applications,
each with a discretized Gaussian distribution,
but we randomly generated the parameters of each Gaussian.

\Cref{fig:worst} summarizes the results by showing
the \emph{worst-case mean response time ratio} between several policies and Gittins
across all 100 scenarios.
We focus on high load $\rho = 0.95$
to emphasize the differences between the scheduling policies.\footnote{%
  We actually computed the ratio at all loads,
  but \cref{fig:worst} shows only $\rho = 0.95$ for simplicity.
  SERPT's worst-case mean response time ratio increases by no more than $0.01$.}
To clarify what \cref{fig:worst} represents,
consider the SERPT bar in the left chart, which has value $1.072$.
This means that for the size-oblivious setting at load $\rho = 0.95$,
across all 100 generated scenarios,
the maximum ratio $\E{T_{\text{SERPT}}} / \E{T_{\text{Gittins}}}$ was~$1.072$.
The bars for other policies and the class-aware setting are computed similarly.

\figureMathematica{worst}{
  Worst-Case Response Time Ratio out of 100 Randomly Generated Scenarios
}{
  Mean response time ratios relative to Gittins
  in the size-oblivious (left) and class-aware (right) settings
  at load $\rho = 0.95$.
  Each bar represents the worst-case ratio out of 100 randomly generated scenarios,
  each of which is a variation of \cref{fig:sys_dists} with different parameters.
  Specifically, the job size distribution is a mixture
  of four applications' distributions,
  each of which is a Gaussian with uniformly distributed
  mean and standard deviation,
  discretized and restricted to the interval $[0, 16]$.
  We ensure that in each scenario,
  one application's mean is in each of the intervals
  $[0, 4]$, $[4, 8]$, $[8, 12]$, and $[12, 16]$.
  We label the applications in order of increasing mean as A, B, C, and~D,
  from which we define Systems~1122, 1212, and~1221,
  as described in \cref{sec:substitute:class-aware}.
}

The main takeaway of \cref{fig:worst} is that
the conclusions of this section appear to be robust to changing the parameters
of the scenario shown in \cref{fig:sys_dists}.
In particular, SERPT is the only policy that consistently achieves near-optimal performance.

\subsection{Summary: SERPT Suffices}
\label{sec:substitute:summary}

In the size-oblivious and class-aware settings,
the optimal Gittins policy can be impractically complicated,
creating the need for a simpler substitute.
We have seen that SERPT
consistently achieves mean response time close to that of Gittins's,
despite having a much simpler rank function.
We therefore recommend SERPT as a substitute for Gittins for most situations.
In some specific cases, one can use a policy even simpler than SERPT
(\cref{sec:substitute:size-oblivious:further_simplify,
  sec:substitute:class-aware:further_simplify}),
but there are also cases where SERPT outperforms simpler heuristics.

\section{Limited Priority Levels}
\label{sec:lpl}

Many practical computer systems
permit only a small finite number of priority levels, or ranks.
For example,
network switches have a small number of priority levels, typically at most~$8$,
built into their hardware \citep{montazeri_homa_2018}.
Similarly, the Linux kernel packet scheduler has
a configurable number of priority levels, with a default of~$3$
\citep{harchol-balter_size-based_2003}.
We call this the \emph{Limited-Priority-Level} (LPL) setting.

Unfortunately, many scheduling policies assume a continuum of ranks.
For example, under SRPT, a job's rank is its remaining size,
for which there are infinitely many possible values.
System designers who would like to implement policies like SRPT in LPL settings
are thus confronted with the challenge of approximating their policy
using a finite number of possible ranks.
SERPT and Gittins suffer from the same problem.

Given an ``ideal'' policy, such as SRPT,
a common approach to scheduling in the LPL setting is to categorize jobs into levels
based on their rank under the ideal policy
\citep{montazeri_homa_2018, harchol-balter_size-based_2003}.
However, it is not clear how best to do this.
Continuing with the SRPT example, if one has only two priority levels,
we can assign rank~1 to jobs smaller than a size cutoff~$c$
and rank~2 to jobs larger than~$c$,
but this begs the question: how do we choose~$c$?

In this section, we give guidelines for designing scheduling policies
for the size-aware and size-oblivious LPL settings.
We tackle the following system design questions:
\begin{itemize}
\item
  How many priority levels do we need to rival
  the performance of ideal policies?
\item
  How should we choose the rank cutoffs between levels?
\item
  Which ideal policy works best when adapted to the LPL setting?
\end{itemize}
The remainder of this section addresses these questions for
the size-aware (\cref{sec:lpl:size-aware})
and size-oblivious (\cref{sec:lpl:size-oblivious})
settings.

  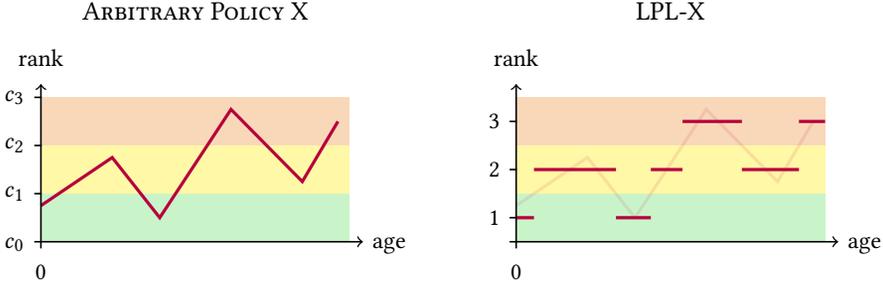
\begin{figure}
    \centering
    \begin{tikzpicture}[figure]
  \begin{scope}[shift={(-10,0)}]
    \node at (13/2, 9.5) {\normalsize\strut\textsc{Arbitrary Policy X}};

    \fill[band_a] (0, 0) rectangle (13, 2);
    \fill[band_b] (0, 2) rectangle (13, 4);
    \fill[band_c] (0, 4) rectangle (13, 6);

    \yguide[$c_1$]{0}{2}
    \yguide[$c_2$]{0}{4}
    \yguide[$c_3$]{0}{6}

    \draw[primary]
    (0, 1.5) -- (3, 3.5)
    -- (5, 1) -- (8, 5.5)
    -- (11, 2.5) -- (12.5, 5);

    \axes{13}{6}{$c_0$}{age}{$0$}{rank}
  \end{scope}

  \begin{scope}[shift={(10,0)}]
    \node at (13/2, 9.5) {\normalsize\strut\textsc{LPL-X}};
    \fill[band_a] (0, 0) rectangle (13, 2);
    \fill[band_b] (0, 2) rectangle (13, 4);
    \fill[band_c] (0, 4) rectangle (13, 6);

    \yguide[$1$]{0}{1}
    \yguide[$2$]{0}{3}
    \yguide[$3$]{0}{5}

    \draw[primary, opacity=0.1]
    (0, 1.5) -- (3, 3.5)
    -- (5, 1) -- (8, 5.5)
    -- (11, 2.5) -- (12.5, 5);

    \draw[primary]
    (0, 1) -- (0.75, 1)
    (0.75, 3) -- (4.2, 3)
    (4.2, 1) -- (5 + 2/3, 1)
    (5 + 2/3, 3) -- (7, 3)
    (7, 5) -- (9.5, 5)
    (9.5, 3) -- (11.9, 3)
    (11.9, 5) -- (13, 5);

    \axes{13}{6}{}{age}{$0$}{rank}
  \end{scope}

  \node at (-21.5 + 13/2, 0) {};
  \node at (21.5 + 13/2, 0) {};
\end{tikzpicture}

    \ifempty{}{}{\begin{justify}
      \footnotesize\ignorespaces\unskip
    \end{justify}}
    \caption{\ignorespaces
  Transforming an Arbitrary Policy~X into LPL-X
\unskip}
    \label{fig:x_to_lpl-x}
  \end{figure}

Throughout, we use the following strategy
for adapting scheduling policies to the LPL setting.
Suppose we want to adapt ideal policy~X
which uses a continuum of priority levels.
We create an LPL version of~X, called \emph{LPL-X},
by restricting its rank function to one of finitely many values
in the following way.
Suppose our system allows $n$ priority levels.
We choose a number of \emph{rank cutoffs} $c_1, c_2, \dots, c_{n - 1}$.
Whenever X would assign a job a rank between $c_{i - 1}$ and~$c_i$,
LPL-X assigns the job rank~$i$.\footnote{%
  As edge cases, let $c_0 = 0$ and $c_n = \infty$.}
\Cref{fig:x_to_lpl-x} illustrates this transformation from X to LPL-X.

\subsection{The Size-Aware LPL Setting: LPL-SRPT}
\label{sec:lpl:size-aware}

\begin{table}
  \caption{Highly Variable Job Size Distributions}
  \label{tab:high-var_dist}
  \centering
  \begin{tabular}{@{}lll@{}}
    \toprule
    \textsc{Name} & \textsc{Density Function} & \textsc{Squared Coefficient of Variation} \\
    \midrule
    Bounded Pareto & $f_S(x) \approx 1/x^2$ for $1 \leq x < 100000$ & $C_S^2 \approx 753$ \\[0.2em]
    Weibull & $f_S(x) = x^{-3/4}\exp(-x^{1/4})/4$ for $x \geq 0$ & $C_S^2 = 69$ \\
    \bottomrule
  \end{tabular}
\end{table}

\newcommand{\figLplBarsSrpt}{\figureMathematica{lpl_bars_srpt_80}{
  Response Time of LPL-SRPT as a Function of Number of Levels
}{}}

We now address the question of how to minimize mean response time
in the size-aware LPL setting.
Inspired by SRPT's optimality, we investigate LPL-SRPT.
We can think of LPL-SRPT as assigning rank~$i$
to a job with remaining size between cutoffs $c_{i - 1}$ and~$c_i$.

\Cref{fig:lpl_bars_srpt_80} compares LPL-SRPT to SRPT
as a function of the number of priority levels
for two different job size distributions,
which are described in \cref{tab:high-var_dist}.
For brevity, we show only moderate load $\rho = 0.8$,
but the trends are virtually identical at other loads.
Optimizing LPL-SRPT's mean response time amounts to
tuning the cutoffs $c_1, c_2, \dots, c_{n - 1}$.
We consider two strategies for tuning the cutoffs:
\begin{itemize}
\item
  \emph{Heuristic cutoffs:} we set the rank cutoffs
  to split load evenly between the ranks,\footnote{%
    More formally, this means setting the cutoffs such that
    $\E{S \1(c_{i - 1} \leq S < c_i)}$ has the same value
    for all $i \in \{1, 2, \dots, n\}$,
    where $\1$ is the indicator function.}
  a simple heuristic which has been used in practice
  \citep{montazeri_homa_2018, harchol-balter_size-based_2003}.
  The resulting cutoffs depend on the size distribution~$S$
  but not on the load~$\rho$.
\item
  \emph{Optimal cutoffs:} we numerically optimize the rank cutoffs
  to yield minimal mean response time.
  The resulting cutoffs depend on both the job size distribution~$S$
  and the load~$\rho$.
\end{itemize}

\figLplBarsSrpt

The job size distributions in \cref{tab:high-var_dist}
are both high-variance job size distributions.
We expect our conclusions generalize
to other high-variance job size distributions.
The low-variance case turns out to be less interesting,
because as mentioned in \cref{sec:substitute:size-aware:nonpreemptive},
simple policies like FCFS already offer good performance.

\subsubsection{How many priority levels do we need to rival the performance of ideal policies?}
\label{sec:lpl:size-aware:how_many}

We see that \emph{roughly 6~priority levels is enough}
to approach the mean response time of SRPT,
the optimal ideal policy in the size-aware setting.
Even with the heuristic cutoffs,
6~levels gives mean response time within 21\% of SRPT's
at the moderate load $\rho = 0.8$.
This supports the empirical findings of LPL system designers:
\citet{harchol-balter_size-based_2003} use 6~levels,
and \citet{montazeri_homa_2018} use up to 7~levels.

With this said, we note that using just 2~priority levels
is still an order-of-magnitude improvement over FCFS.
This is because FCFS has very poor mean response time
for the high-variance job size distributions,
because small and medium jobs can get stuck behind very long jobs.
\citep{harchol-balter_performance_2013}.

\subsubsection{How should we choose the rank cutoffs between levels?}
\label{sec:lpl:size-aware:cutoffs}


We see that the
\emph{load-balancing heuristic cutoffs serve as a good rule of thumb}.
In additional numerical experiments, omitted for brevity,
we tried several other heuristics,
such as evenly splitting the number of arrivals in each bucket
or using a geometric sequence for the rank cutoffs~$c_i$.
None of these alternatives performed well as consistently
as the load-balancing heuristic used in \cref{fig:lpl_bars_srpt_80}.

Additional examples at other loads,
omitted for brevity,
show that the gap between the heuristic and optimal cutoffs
becomes more important under higher load,
especially for small numbers of priority levels.
However, using optimal cutoffs presents a practical difficulty:
the optimal cutoffs depend on the load~$\rho$.
Fortunately, if we optimize the cutoffs for $\rho = 0.8$,
additional numerical experiments, omitted for brevity,
show that we achieve near-optimal performance for all but the highest loads.

\subsubsection{Which ideal policy works best when adapted to the LPL setting?}
\label{sec:lpl:size-aware:which_ideal}

\newcommand{\figLplBarsPsjf}{\figureMathematica{lpl_bars_psjf_80}{
  Response Time Comparison: LPL-SRPT vs. LPL-PSJF
}{}}

\figLplBarsPsjf

We began this section by focusing on LPL-SRPT.
However, even though SRPT is the optimal policy with infinite priority levels,
LPL-SRPT turns out \emph{not} to be the best policy in the LPL setting.

Recall that SRPT uses a job's \emph{remaining} size as its rank
(\cref{fig:rank_examples}).
This means that LPL-SRPT ``upgrades'' jobs to the next priority level
when they become small enough.
Unfortunately, because of the limited number of priority levels,
this can cause smaller (new) jobs to have to wait behind larger (recently upgraded) jobs.
Fortunately, it turns out that a simple alteration to LPL-SRPT
can avoid this issue:
never change a job's rank.
This results in a policy called LPL-PSJF.\footnote{%
  One can view LPL-PSJF as a version of the P-Prio policy.}
The name comes from the \emph{Preemptive Shortest Job First} (PSJF) policy,
under which a job's rank is its \emph{original} size,
as opposed to its remaining size.
\Cref{fig:lpl_bars_psjf_80} shows that, counter to intuition,
LPL-PSJF outperforms LPL-SRPT.


\subsection{Designing an LPL Policy for the Size-Oblivious Setting}
\label{sec:lpl:size-oblivious}

\newcommand{\figLplBarsFb}{\figureMathematica{lpl_bars_fb_80}{
  Response Time of LPL-FB as a Function of Number of Levels
}{}}

We now turn to the size-oblivious LPL setting.
For low-variance job size distributions,
we have seen that FCFS can serve as a good Gittins substitute
(\cref{sec:substitute:size-oblivious:further_simplify}),
so we focus on the high-variance job size distributions
from \cref{tab:high-var_dist}.
For these distributions, FB is either optimal or a good Gittins substitute
(\cref{sec:substitute:size-oblivious:further_simplify}),
so LPL-FB is a promising policy for the LPL setting.
We can think of LPL-FB as assigning rank~$i$
to a job with age between cutoffs $c_{i - 1}$ and~$c_i$.

\Cref{fig:lpl_bars_fb_80} compares LPL-FB to FB
as a function of the number of priority levels
for two different job size distributions,
which are described in \cref{tab:high-var_dist}.
We consider the same two strategies for tuning the cutoffs as in
\cref{sec:lpl:size-aware}:
load-balancing \emph{heuristic cutoffs}
and numerically-solved \emph{optimal cutoffs}.

\figLplBarsFb

\subsubsection{How many priority levels do we need to rival the performance of ideal policies?}
\label{sec:lpl:size-oblivious:how_many}

We reach a conclusion similar to the size-aware setting:
\emph{roughly 5~priority levels is enough}
to approach the mean response time of FB,
the optimal or near-optimal ideal policy in the size-oblivious setting
for the job size distributions in \cref{tab:high-var_dist}.
Even with the heuristic cutoffs,
5~levels gives mean response time within 23\% of FB's
at the moderate load $\rho = 0.8$.

With this said,
as in the size-aware setting (\cref{sec:lpl:size-aware:how_many}),
using just 2~priority levels is still an order-of-magnitude improvement over using FCFS.

\subsubsection{How should we choose the rank cutoffs between levels?}
\label{sec:lpl:size-oblivious:cutoffs}

We again reach a conclusion similar to the size-aware setting:
\emph{load-balancing heuristic cutoffs serve as a good rule of thumb}.
In particular, this heuristic outperformed
the other heuristics we tried (\cref{sec:lpl:size-aware:cutoffs}).
Using optimal cutoffs again becomes more important at higher load.

\subsubsection{Which ideal policy works best when adapted to the LPL setting?}
\label{sec:lpl:size-oblivious:which_ideal}

We have not found an LPL policy that significantly improves upon LPL-FB
in the size-oblivious setting.
There is some theoretical evidence supporting this conclusion.
\Citet{scully_simple_2020} propose a policy which in some ways resembles LPL-FB,
which they prove always achieves mean response time within a factor of~5 of Gittins's.

In some ways, the conclusion that FB is the best policy to adapt
to the size-oblivious LPL setting
matches our conclusion for the size-aware LPL setting,
in which we found that LPL-PSJF outperformed LPL-SRPT.
Both LPL-FB and LPL-PSJF have the property that
\emph{a job's rank never improves}.
This property also holds for the policy proposed by \citet{scully_simple_2020}.


\subsection{Summary: 5 or 6 Levels Is Enough}

We have seen that one can obtain near-optimal mean response time in the LPL setting
with just 5 or 6 priority levels,
using LPL-SRPT in the size-aware setting
and LPL-FB in the size-oblivious setting.
Even just 2~priority levels gives an order-of-magnitude improvement over FCFS.
A simple load-balancing heuristic suffices
for choosing the rank cutoffs between priority levels,
though it is possible to improve upon this already-good heuristic
(\cref{sec:lpl:size-aware:which_ideal}).

\section{Preemption Checkpoints}
\label{sec:pc}

\newcommand{\figPcFb}{%
  \begin{figure}
    \centering
    \begin{tikzpicture}[figure]
  \begin{scope}[shift={(-10,0)}]
    \node at (6/2, 11.5) {\normalsize\strut\textsc{Without Checkpoints}};

    \draw[primary]
    (0, 0) -- (6, 6);

    \axes{6}{6}{$0$}{age}{$0$}{rank}
  \end{scope}

  \begin{scope}[shift={(10,0)}]
    \node at (6/2, 11.5) {\normalsize\strut\textsc{With Checkpoints}};

    \xguide[$\delta$]{1.75}{3.75}
    \xguide[$2\delta$]{3.5}{5.5}
    \xguide[$3\delta$]{5.25}{7.25}

    \draw[primary, opacity=0.1]
    (0, 2) -- (6, 8);

    \draw[primary]
    (0, 1) -- (6, 1);

    \axes{6}{8}{$0$}{age}{$0$}{rank}

    \jump{0}{1}{2}
    \jump{1.75}{1}{3.75}
    \jump{3.5}{1}{5.5}
    \jump{5.25}{1}{7.25}
  \end{scope}

  \node at (-21.5 + 6/2, 0) {};
  \node at (21.5 + 6/2, 0) {};
\end{tikzpicture}

    \ifempty{}{}{\begin{justify}
      \footnotesize\ignorespaces\unskip
    \end{justify}}
    \caption{\ignorespaces
  How Preemption Checkpoints Affect the Rank Function of FB
\unskip}
    \label{fig:pc_fb}
  \end{figure}
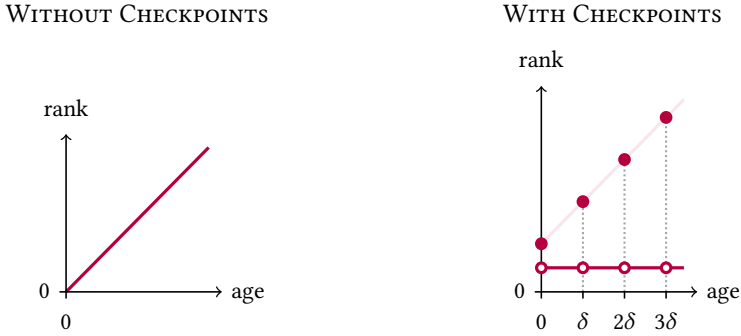}

\ifmanuscript{\figPcFb}{}

Thus far we have been assuming that jobs may be freely preempted,
but this is far from the case in practical computer systems.
Programs can have temporary state at all levels of the memory hierarchy,
from registers to RAM.
For cloud-based jobs, even the disk may be temporary state.
Before preempting a job, we must either save or discard its state.

Checkpointing is a common solution used to combat lost state.
Every job occasionally saves its transient state,
and we allow preemptions only immediately after saving.
We refer to ages when a job saves its work as \emph{checkpoints}.
Because saving work takes time, each checkpoint adds to a job's size.

A very similar situation occurs when scheduling packet flows in networks,
e.g. at a network switch.
In this setting, each packet flow is a job,
and serving a job corresponds to sending its packets.
Packets are indivisible, and each packet incurs overhead in the form of its header.
We can think of packet boundaries as analogous to checkpoints.

When scheduling in a system with checkpointing,
the key question is: \emph{how much time should there be between checkpoints?}
Or, in the context of packet flows: \emph{how large should packets be?}
Answering this question requires balancing a delicate tradeoff.
\begin{itemize}
\item
  On one hand, less time between checkpoints allows for quicker preemptions.
  This decreases mean response time
  because small jobs are less likely to get stuck during
  long uninterruptible periods between checkpoints.
\item
  On the other hand, checkpointing takes time,
  so the less time between checkpoints add more to the system load.
  This in turn can increase mean response time or even cause instability.
\end{itemize}

We give a rule of thumb for balancing this tradeoff.
After discussing in more detail how we model checkpoints (\cref{sec:pc:model}),
we determine how frequently checkpoints should occur
to minimize mean response time (\cref{sec:pc:tradeoff}).
Throughout, we focus on the size-oblivious setting,
but in additional numerical experiments, omitted for brevity,
we have observed the same results in the size-aware setting.

\subsection{Jobs with Preemption Checkpoints}
\label{sec:pc:model}

\ifmanuscript{}{\figPcFb}

We consider a system where jobs cannot be preempted unless their work is saved.
Jobs save their work at specific ages called \emph{checkpoints}.
Upon reaching a checkpoint,
a job takes a deterministic amount of \emph{overhead time}~$\gamma$ to save its work,
after which the job may be preempted.
The overhead time does not count towards a job's age.

A job can only be preempted at checkpoint ages.
Given a scheduling policy,
we can express this constraint by modifying the scheduling policy's rank function.
Specifically, we leave the rank function as-is for checkpoint ages,
but for intermediate ages, we set the rank to a low value.
This gives a job between checkpoints priority over jobs at checkpoints.
\Cref{fig:pc_fb} shows what this looks like for FB.

For simplicity, we study the case where checkpoint ages
are distributed evenly with \emph{age gap}~$\delta$.
That is, checkpoints occur at ages $\delta, 2\delta, 3\delta, \dots.$
Our main task is to optimize the gap~$\delta$
given the job size distribution~$S$, load~$\rho$, and overhead~$\gamma$.

\subsection{Optimizing the Gap Between Checkpoints}
\label{sec:pc:tradeoff}

\newcommand{\figPcLine}{\figureMathematica{pc_line}{
  Response Time as a Function of Checkpoint Interval~$\delta$, Large Overhead $\gamma = 0.1\,\E{S}$, Load $\rho = 0.8$
}{
  We show mean response time of FB with checkpoints
  as a function of the normalized checkpoint interval~$\delta/\E{S}$
  for a system with large checkpoint overhead, namely $\gamma = 0.1\,\E{S}$.
  We use FB in a system \emph{without} checkpoints or overhead as a baseline for comparison.
  We consider Bounded Pareto and Weibull job size distributions
  (\cref{tab:high-var_dist})
  and load $\rho = 0.8$.
  As we decrease~$\delta$, mean response time decreases approximately linearly
  until a critical value ``left wall'' value of~$\delta$,
  at which point the system becomes unstable and mean response time approaches infinity.
  The vertical lines show a conservative estimate, given by~\cref{eq:delta_safe},
  of where the left wall is.
}}

\newcommand{\figPcLargeOverhead}{\figureMathematica{pc_large_overhead}{
  Response Time as a Function of Checkpoint Interval~$\delta$ (Log Scale), Large Overhead $\gamma = 0.1\,\E{S}$
}{
  We show normalized mean response time as a function of the normalized checkpoint interval~$\delta/\E{S}$
  for a system with large checkpoint overhead, namely $\gamma = 0.1\,\E{S}$.
  We consider Bounded Pareto and Weibull job size distributions
  (\cref{tab:high-var_dist})
  for several values of load~$\rho$.
  We highlight the rule-of-thumb value of $\delta$ given by \cref{eq:pc_rule_of_thumb}.
}}

\newcommand{\figPcSmallOverhead}{\figureMathematica{pc_small_overhead}{
  Response Time as a Function of Checkpoint Interval~$\delta$ (Log Scale), Small Overhead $\gamma = 0.01\,\E{S}$
}{
  We show normalized mean response time as a function of the normalized checkpoint interval~$\delta/\E{S}$
  for a system with small checkpoint overhead, namely $\gamma = 0.01\,\E{S}$.
  We consider Bounded Pareto and Weibull job size distributions
  (\cref{tab:high-var_dist})
  for several values of load~$\rho$.
  We highlight the rule of thumb for the optimal value of $\delta$ given by \cref{eq:pc_rule_of_thumb}.
}}

\ifmanuscript{\figPcLine}{}

We focus on the size-oblivious setting
with the high-variance job size distributions from \cref{tab:high-var_dist}.\footnote{%
  To make the analysis numerically tractable,
  we actually truncate the distributions at size~$5000$
  and discretize them into increments of size~$0.125$.
  The trends we observe are not sensitive to these values.}
For these job size distributions, FB is a near-optimal Gittins substitute,
so we schedule using the version of FB with checkpoints shown in \cref{fig:pc_fb}.

In \cref{fig:pc_line}
we show how mean response time varies as a function of the gap~$\delta$ between checkpoints.
The figure shows the specific case of relatively large overhead $\gamma = 0.1\,\E{S}$ at load $\rho = 0.8$.
As we will show later,
the trends are similar for small overhead and other loads
(\cref{fig:pc_small_overhead, fig:pc_large_overhead}).

\ifmanuscript{}{\figPcLine}

For large enough~$\delta$,
\cref{fig:pc_line} shows that mean response time is a roughly linear function of~$\delta$,
where smaller $\delta$ is better.
However, there is a vertical asymptote as $\delta$ approaches a small value,
which we call the ``left wall''.
This leaves us in a precarious situation:
smaller values of~$\delta$ generally yield lower mean response time,
but if $\delta$ becomes too small,
mean response time suddenly becomes infinite.
In the remainder of this section, we explain how to choose a value of~$\delta$
that is small enough but not too small.

\subsubsection{Ensuring stability with a safe checkpoint age gap}

As a first step towards optimizing~$\delta$,
we must figure out the value $\delta_{\text{left wall}}$
at which the left wall asymptote occurs.
The asymptote at the left wall is caused by checkpoints becoming so frequent
that the overheads make the system unstable,
which sends mean response time to infinity.

It turns out that $\delta_{\text{left wall}}$ has a complicated formula,
but there is a simple formula that bounds it.
Let
\begin{align}
  \label{eq:delta_safe}
  \delta_{\text{safe}} = \frac{\gamma\rho}{1 - \rho}.
\end{align}
As we prove in \cref{sec:delta_safe},
we always have $\delta_{\text{left wall}} \leq \delta_{\text{safe}}$.
This means the system is guaranteed to be stable,
and thus the mean response time is finite,
whenever $\delta > \delta_{\text{safe}}$.
This holds not just for this example
but under any scheduling policy and any job size distribution.

\ifmanuscript{\figPcLargeOverhead}{}

\subsubsection{Optimizing the checkpoint age gap}

We want to choose a gap~$\delta$ between checkpoints
that is not just stable but also optimizes mean response time.
We see from \cref{fig:pc_line} that
the optimal value of $\delta$ is slightly larger than $\delta_{\text{safe}}$,
but it is not clear how much larger is optimal.
Through a number of examples,
we have found the following rule of thumb to give near-optimal performance
in the typical case
when the overhead~$\gamma$ is a fraction of the mean job size~$\E{S}$:
\begin{align}
  \label{eq:pc_rule_of_thumb}
  \delta_{\text{rule of thumb}} = \frac{1}{1 - \rho} \sqrt{\frac{\gamma \E{S}}{\rho}}.
\end{align}
\Cref{fig:pc_large_overhead, fig:pc_small_overhead} show that
this rule-of-thumb gap yields near-optimal performance
for a range of loads~$\rho$ and overheads~$\gamma$.

\ifmanuscript{}{\figPcLargeOverhead}

The intuition behind \cref{eq:pc_rule_of_thumb} is as follows.
We see in \cref{fig:pc_large_overhead, fig:pc_small_overhead} that
mean response time has a ``bathtub'' shape
when plotted as a function of~$\delta$ on a log scale.
Each bathtub is approximately symmetrical near its minimum.
This suggests that if we can find the values of~$\delta$
corresponding to the ``walls'' of the bathtub,
then their geometric mean would be a good rule of thumb.
This is exactly the approach we take.
\begin{itemize}
\item
  \emph{Left wall:} we use $\delta_{\text{safe}}$
  as an approximation for the left wall.
\item
  \emph{Right wall:} through additional numerical experiments,
  omitted for brevity,
  we have found that the formula $\E{S}/(\rho^2 (1 - \rho))$
  is a good approximation for the right wall.
  Our search for a good approximation was guided by the observation
  that the right wall is virtually unaffected by the overhead~$\gamma$,
  as we see by comparing \cref{fig:pc_large_overhead, fig:pc_small_overhead}.
  This makes sense: when the gap~$\delta$ between checkpoints is large,
  there are very few preemptions,
  so the overhead~$\gamma$ has little impact of mean response time.
\end{itemize}

\figPcSmallOverhead

\subsection{Summary: Rule-Of-Thumb Formula}

When scheduling in systems with checkpointing,
there is a tradeoff between making checkpoints less frequent,
which avoids checkpoint-associated overhead,
and more frequent,
which enables smarter scheduling.
In \cref{eq:pc_rule_of_thumb},
we propose a rule-of-thumb formula
that yields a near-optimal gap between checkpoints.
Although we focus throughout on the size-oblivious setting,
we have observed that this rule of thumb
also works well in the size-aware setting.


\section{Prior Work}
\label{sec:prior_work}

The purpose of this paper is not to prove new results in queueing theory.
Instead, our goal is to draw upon new developments in queueing theory
and extract actionable lessons for computer system design.
In this prior work section, we walk through each section of the paper,
explaining what was previously known and what we contribute.
Broadly speaking,
while prior work lays theoretical foundations for analyzing
all of the policies in this paper,
we are the first to apply that theory to synthesizing practical takeaways
for the scenarios we study.

\subsection{Prior Work on Creating a Gittins Substitute (\cref{sec:substitute})}

All of the scheduling policies discussed in \cref{sec:substitute}
are well-known policies from the scheduling literature.
The SRPT and Gittins policies
are known to minimize mean response time in their respective settings
\citep{schrage_proof_1968, gittins_multi-armed_2011}.
However, despite the optimality of Gittins being known,
it was until recently not known how to compute
the mean response time achieved by Gittins.
This changed in \citeyear{scully_soap_2018},
when \citet{scully_soap_2018} introduced the SOAP framework,
which gives a technique for analyzing the mean response times
of Gittins, SERPT, and numerous other policies.

In principle, the SOAP framework makes it possible to study the question of
whether SERPT is a good Gittins substitute.
However, to the best of our knowledge,
only one prior study attempts to address this question.
\Citet{scully_simple_2020} prove that in the size-oblivious setting,
a modification of SERPT achieves within a factor of~5 of Gittins's mean response time
for \emph{any} job size distribution~$S$.
Unfortunately,
while this factor-of-5 bound is an important step from a theoretical perspective,
it is too loose to be useful in practice.
Moreover, the result applies only in the size-oblivious setting,
leaving the case of the class-aware setting open.

\subsubsection{Our contribution}

We give the first study that compares Gittins with SERPT
from a practical perspective.
Rather than attempting to prove a universal result,
we compare Gittins and SERPT in a wide variety of concrete examples,
covering both size-oblivious and class-aware settings.
To the best of our knowledge,
ours is the first comparison of this kind,
and findings like those shown in \cref{fig:worst} are new.
We find that SERPT consistently achieves much better mean response time
than the best known theoretical bound would suggest
\citep{scully_simple_2020}.
We thus believe that our example-driven approach
gives the clearest picture yet of how SERPT compares to Gittins in practice.

\subsection{Prior Work on Limited Priority Levels (\cref{sec:lpl})}

The specific LPL policies we study in \cref{sec:lpl}
have been studied in both systems and theory,
although they do not have consistently used names.

In systems, both \citet{harchol-balter_size-based_2003} and \citet{montazeri_homa_2018}
design scheduling policies for size-aware computer systems
with limited priority levels,
arriving at variants of LPL-SRPT and LPL-PSJF, respectively.

In theory, \citet{kleinrock_queueing_1976} introduces and analyzes
a broad class of scheduling policies that includes LPL-FB.
Similarly, the previously discussed SOAP framework \citep{scully_soap_2018}
can be used to analyze the mean response time of LPL-SRPT, LPL-PSJF, and LPL-FB.
Most recently, \citet{chen_scheduling_2020} prove theorems about
LPL-SRPT with 2 priority levels
in the heavy-traffic limit, meaning the $\rho \to 1$ limit.
The main result is that even with just 2 levels,
LPL-SRPT has asymptotically better mean response time than FCFS,
provided one chooses the cutoff carefully as a function of load~$\rho$.

\subsubsection{Our contribution}

We view our work as a middle ground between
the prior work in systems and the prior work in theory.
Compared to the systems,
we do a more mathematically rigorous analysis.
Compared to the theory,
we study on a more practical question,
considering up to 7 priority levels and all loads.\footnote{%
  For brevity, the figures in \cref{sec:lpl} show $\rho = 0.8$,
  but the trends are virtually identical at other loads.}
Our findings provide validation
for design choices that have been used in systems
\citep{harchol-balter_size-based_2003, montazeri_homa_2018},
such as using a load-balancing heuristic for choosing
the remaining size cutoffs between levels.

Although the LPL-SRPT and LPL-PSJF policies have been used in the past,
we believe that viewing them as special cases of
a generic ``X to LPL-X'' transformation (\cref{fig:x_to_lpl-x})
is a novel perspective.
For instance, an interesting future direction might be to study
LPL-Gittins or LPL-SERPT.

\subsection{Prior Work on Preemption Checkpoints (\cref{sec:pc})}

Checkpointing has been well studied in queueing systems
from the perspective of using checkpoints to minimize work lost
in the event of system crashes
\citep{nicola_comparative_1990, dohi_optimal_2000, aupy_checkpointing_2016}.
However, this line of work assumes FCFS scheduling.
In contrast, our goal is to minimize mean response time in systems with checkpointing,
which requires considering other scheduling policies.

One of the only works incorporating both checkpointing and non-FCFS scheduling
is that of \citet{goerg_further_1990},
which analyzes size-aware systems under a variant of SRPT with checkpointing.
However, it stops short of characterizing the optimal gap between checkpoints.
In principle, one could use the SOAP framework \citep{scully_soap_2018}
to compute the mean response time of other policies in systems with checkpointing.

\subsubsection{Our contribution}

We provide the first study of checkpointing with non-FCFS scheduling,
namely a variation of FB,
in the size-oblivious setting.
We use the SOAP framework in our mean response time computations.
Our main result is a simple rule of thumb
that yields a near-optimal gap between checkpoints.\footnote{%
  For brevity, the figures in \cref{sec:pc} focus on the size-oblivious setting,
  but we have observed that the rule of thumb works well
  in the size-aware setting, too.}

\section{Lessons Learned}
\label{sec:lessons}

Scheduling has a profound impact on response time in queueing systems.
Unfortunately, the policies that minimize mean response time in theory,
namely SRPT and Gittins,
can be difficult or impossible to implement perfectly in practice.
In this work, we have shown how to achieve near-optimal performance
under real-world constraints.
Below, we summarize the main lessons learned.
\begin{itemize}
\item
  When job sizes are unknown or only partially known to the scheduler,
  the Gittins policy can be impractically complicated to implement.
  We show that \emph{the simpler SERPT policy serves as a good substitute for Gittins}
  in both the size-oblivious and class-aware settings
  (\cref{sec:substitute}).
\item
  Some systems, such as network switches,
  restrict the scheduler to a limited number of priority levels,
  making even simple policies like SRPT and FB impossible to implement.
  We show that \emph{just 5 or 6 priority levels suffices}
  to achieve near-optimal mean response time,
  with even just 2 levels giving an order-of-magnitude improvement over FCFS
  (\cref{sec:lpl}).
\item
  In some settings,
  one can only safely preempt a job if it has saved its work at a checkpoint.
  One instance of this is scheduling packet flows in a network,
  where checkpoints occur after sending each packet.
  While more frequent checkpoints allows for smarter scheduling,
  each such checkpoint incurs some overhead.
  Too many checkpoints can cause the system to become unstable,
  sending mean response time to infinity.
  Finding the optimal checkpoint frequency thus requires balancing a delicate tradeoff.
  We give a simple rule-of-thumb formula for checkpoint frequency
  that yields near-optimal performance
  (\cref{sec:pc}).
\end{itemize}

\begin{acks}
  We thank Srinivasan Seshan, Andy Pavlo, and Michael Kuchnik
  for their detailed comments on an earlier draft of this work.
  We also thank John Ousterhout, Behnam Montazeri,
  Douglas G. Down, Maryam Akbari-Moghaddam,
  and Isaac Grosof
  for helpful discussions.

  This work was supported by \grantsponsor{nsf}{NSF}{https://www.nsf.gov} grants
  \grantnum{nsf}{CMMI-1938909}, \grantnum{nsf}{XPS-1629444}, and \grantnum{nsf}{CSR-1763701};
  and a \grantsponsor{google}{Google}{https://research.google/outreach/} 2020 Faculty Research Award.
\end{acks}

\bibliographystyle{ACM-Reference-Format}
\bibliography{refs}


\begin{thebibliography}{31}


\ifx \showCODEN    \undefined \def \showCODEN     #1{\unskip}     \fi
\ifx \showDOI      \undefined \def \showDOI       #1{#1}\fi
\ifx \showISBNx    \undefined \def \showISBNx     #1{\unskip}     \fi
\ifx \showISBNxiii \undefined \def \showISBNxiii  #1{\unskip}     \fi
\ifx \showISSN     \undefined \def \showISSN      #1{\unskip}     \fi
\ifx \showLCCN     \undefined \def \showLCCN      #1{\unskip}     \fi
\ifx \shownote     \undefined \def \shownote      #1{#1}          \fi
\ifx \showarticletitle \undefined \def \showarticletitle #1{#1}   \fi
\ifx \showURL      \undefined \def \showURL       {\relax}        \fi
\providecommand\bibfield[2]{#2}
\providecommand\bibinfo[2]{#2}
\providecommand\natexlab[1]{#1}
\providecommand\showeprint[2][]{arXiv:#2}

\bibitem[\protect\citeauthoryear{Aalto, Ayesta, and Righter}{Aalto
  et~al\mbox{.}}{2009}]%
        {aalto_gittins_2009}
\bibfield{author}{\bibinfo{person}{Samuli Aalto}, \bibinfo{person}{Urtzi
  Ayesta}, {and} \bibinfo{person}{Rhonda Righter}.}
  \bibinfo{year}{2009}\natexlab{}.
\newblock \showarticletitle{On the {{Gittins}} Index in the {{M}}/{{G}}/1
  Queue}.
\newblock \bibinfo{journal}{\emph{Queueing Systems}} \bibinfo{volume}{63},
  \bibinfo{number}{1-4} (\bibinfo{date}{Dec.} \bibinfo{year}{2009}),
  \bibinfo{pages}{437--458}.
\newblock
\showISSN{0257-0130, 1572-9443}
\urldef\tempurl%
\url{https://doi.org/10.1007/s11134-009-9141-x}
\showDOI{\tempurl}


\bibitem[\protect\citeauthoryear{Aalto, Ayesta, and Righter}{Aalto
  et~al\mbox{.}}{2011}]%
        {aalto_properties_2011}
\bibfield{author}{\bibinfo{person}{Samuli Aalto}, \bibinfo{person}{Urtzi
  Ayesta}, {and} \bibinfo{person}{Rhonda Righter}.}
  \bibinfo{year}{2011}\natexlab{}.
\newblock \showarticletitle{Properties of the {{Gittins}} Index with
  Application to Optimal Scheduling}.
\newblock \bibinfo{journal}{\emph{Probability in the Engineering and
  Informational Sciences}} \bibinfo{volume}{25}, \bibinfo{number}{3}
  (\bibinfo{date}{July} \bibinfo{year}{2011}), \bibinfo{pages}{269--288}.
\newblock
\showISSN{0269-9648, 1469-8951}
\urldef\tempurl%
\url{https://doi.org/10.1017/S0269964811000015}
\showDOI{\tempurl}


\bibitem[\protect\citeauthoryear{Aas}{Aas}{2005}]%
        {aas_understanding_2005}
\bibfield{author}{\bibinfo{person}{Josh Aas}.} \bibinfo{year}{2005}\natexlab{}.
\newblock \bibinfo{title}{Understanding the {{Linux}} 2.6.8.1 {{CPU}}
  Scheduler}.
\newblock
\newblock
\urldef\tempurl%
\url{https://github.com/bdaehlie/linux-cpu-scheduler-docs}
\showURL{%
\tempurl}


\bibitem[\protect\citeauthoryear{{Arpaci-Dusseau} and
  {Arpaci-Dusseau}}{{Arpaci-Dusseau} and {Arpaci-Dusseau}}{2018}]%
        {arpaci-dusseau_operating_2018}
\bibfield{author}{\bibinfo{person}{Remzi~H. {Arpaci-Dusseau}} {and}
  \bibinfo{person}{Andrea~C. {Arpaci-Dusseau}}.}
  \bibinfo{year}{2018}\natexlab{}.
\newblock \bibinfo{booktitle}{\emph{Operating Systems: Three Easy Pieces}
  (\bibinfo{edition}{1.00} ed.)}.
\newblock \bibinfo{publisher}{{Arpaci-Dusseau Books}}.
\newblock
\urldef\tempurl%
\url{http://pages.cs.wisc.edu/~remzi/OSTEP/}
\showURL{%
\tempurl}


\bibitem[\protect\citeauthoryear{Aupy, Benoit, Casanova, and Robert}{Aupy
  et~al\mbox{.}}{2016}]%
        {aupy_checkpointing_2016}
\bibfield{author}{\bibinfo{person}{Guillaume Aupy}, \bibinfo{person}{Anne
  Benoit}, \bibinfo{person}{Henri Casanova}, {and} \bibinfo{person}{Yves
  Robert}.} \bibinfo{year}{2016}\natexlab{}.
\newblock \showarticletitle{Checkpointing Strategies for Scheduling
  Computational Workflows}.
\newblock \bibinfo{journal}{\emph{International Journal of Networking and
  Computing}} \bibinfo{volume}{6}, \bibinfo{number}{1} (\bibinfo{year}{2016}),
  \bibinfo{pages}{2--26}.
\newblock
\showISSN{2185-2839, 2185-2847}
\urldef\tempurl%
\url{https://doi.org/10.15803/ijnc.6.1_2}
\showDOI{\tempurl}


\bibitem[\protect\citeauthoryear{Becchetti and Leonardi}{Becchetti and
  Leonardi}{2004}]%
        {becchetti_nonclairvoyant_2004}
\bibfield{author}{\bibinfo{person}{Luca Becchetti} {and}
  \bibinfo{person}{Stefano Leonardi}.} \bibinfo{year}{2004}\natexlab{}.
\newblock \showarticletitle{Nonclairvoyant Scheduling to Minimize the Total
  Flow Time on Single and Parallel Machines}.
\newblock \bibinfo{journal}{\emph{J. ACM}} \bibinfo{volume}{51},
  \bibinfo{number}{4} (\bibinfo{date}{July} \bibinfo{year}{2004}),
  \bibinfo{pages}{517--539}.
\newblock
\showISSN{0004-5411, 1557-735X}
\urldef\tempurl%
\url{https://doi.org/10.1145/1008731.1008732}
\showDOI{\tempurl}


\bibitem[\protect\citeauthoryear{Capozzi, Piro, Grieco, Boggia, and
  Camarda}{Capozzi et~al\mbox{.}}{2012}]%
        {capozzi_downlink_2012}
\bibfield{author}{\bibinfo{person}{Francesco Capozzi},
  \bibinfo{person}{Giuseppe Piro}, \bibinfo{person}{L.~Alfredo Grieco},
  \bibinfo{person}{Gennaro Boggia}, {and} \bibinfo{person}{Pietro Camarda}.}
  \bibinfo{year}{2012}\natexlab{}.
\newblock \showarticletitle{Downlink Packet Scheduling in {{LTE}} Cellular
  Networks: Key Design Issues and a Survey}.
\newblock \bibinfo{journal}{\emph{IEEE Communications Surveys \& Tutorials}}
  \bibinfo{volume}{15}, \bibinfo{number}{2} (\bibinfo{date}{July}
  \bibinfo{year}{2012}), \bibinfo{pages}{678--700}.
\newblock
\showISSN{1553-877X}
\urldef\tempurl%
\url{https://doi.org/10.1109/SURV.2012.060912.00100}
\showDOI{\tempurl}


\bibitem[\protect\citeauthoryear{Chen and Dong}{Chen and Dong}{2020}]%
        {chen_scheduling_2020}
\bibfield{author}{\bibinfo{person}{Yan Chen} {and} \bibinfo{person}{Jing
  Dong}.} \bibinfo{year}{2020}\natexlab{}.
\newblock \bibinfo{title}{Scheduling with Service-Time Information: The Power
  of Two Priority Classes}.
\newblock
\newblock
\urldef\tempurl%
\url{http://www.columbia.edu/~jd2736/publication/SchedulingTwo.pdf}
\showURL{%
\tempurl}


\bibitem[\protect\citeauthoryear{Dohi, Kaio, and Osaki}{Dohi
  et~al\mbox{.}}{2000}]%
        {dohi_optimal_2000}
\bibfield{author}{\bibinfo{person}{Tadashi Dohi}, \bibinfo{person}{Naoto Kaio},
  {and} \bibinfo{person}{Shunji Osaki}.} \bibinfo{year}{2000}\natexlab{}.
\newblock \showarticletitle{The Optimal Age-Dependent Checkpoint Strategy for a
  Stochastic System Subject to General Failure Mode}.
\newblock \bibinfo{journal}{\emph{J. Math. Anal. Appl.}} \bibinfo{volume}{249},
  \bibinfo{number}{1} (\bibinfo{date}{Sept.} \bibinfo{year}{2000}),
  \bibinfo{pages}{80--94}.
\newblock
\showISSN{0022247X}
\urldef\tempurl%
\url{https://doi.org/10.1006/jmaa.2000.6939}
\showDOI{\tempurl}


\bibitem[\protect\citeauthoryear{Gittins, Glazebrook, and Weber}{Gittins
  et~al\mbox{.}}{2011}]%
        {gittins_multi-armed_2011}
\bibfield{author}{\bibinfo{person}{John~C. Gittins}, \bibinfo{person}{Kevin~D.
  Glazebrook}, {and} \bibinfo{person}{Richard Weber}.}
  \bibinfo{year}{2011}\natexlab{}.
\newblock \bibinfo{booktitle}{\emph{Multi-Armed Bandit Allocation Indices}
  (\bibinfo{edition}{second} ed.)}.
\newblock \bibinfo{publisher}{{Wiley}}, \bibinfo{address}{{Hoboken, NJ}}.
\newblock
\showISBNx{978-0-470-67002-6}
\showLCCN{QA279 .G55 2011}


\bibitem[\protect\citeauthoryear{Goerg}{Goerg}{1990}]%
        {goerg_further_1990}
\bibfield{author}{\bibinfo{person}{Carmelita Goerg}.}
  \bibinfo{year}{1990}\natexlab{}.
\newblock \showarticletitle{Further Results on a New Combined Strategy Based on
  the {{SRPT}}-Principle}.
\newblock \bibinfo{journal}{\emph{IEEE Transactions on Communications}}
  \bibinfo{volume}{38}, \bibinfo{number}{5} (\bibinfo{date}{May}
  \bibinfo{year}{1990}), \bibinfo{pages}{568--570}.
\newblock
\showISSN{00906778}
\urldef\tempurl%
\url{https://doi.org/10.1109/26.54967}
\showDOI{\tempurl}


\bibitem[\protect\citeauthoryear{Grosof, Scully, and {Harchol-Balter}}{Grosof
  et~al\mbox{.}}{2018}]%
        {grosof_srpt_2018}
\bibfield{author}{\bibinfo{person}{Isaac Grosof}, \bibinfo{person}{Ziv Scully},
  {and} \bibinfo{person}{Mor {Harchol-Balter}}.}
  \bibinfo{year}{2018}\natexlab{}.
\newblock \showarticletitle{{{SRPT}} for Multiserver Systems}.
\newblock \bibinfo{journal}{\emph{Performance Evaluation}}
  \bibinfo{volume}{127-128} (\bibinfo{date}{Nov.} \bibinfo{year}{2018}),
  \bibinfo{pages}{154--175}.
\newblock
\showISSN{01665316}
\urldef\tempurl%
\url{https://doi.org/10.1016/j.peva.2018.10.001}
\showDOI{\tempurl}


\bibitem[\protect\citeauthoryear{Grosof, Scully, and {Harchol-Balter}}{Grosof
  et~al\mbox{.}}{2019a}]%
        {grosof_load_2019}
\bibfield{author}{\bibinfo{person}{Isaac Grosof}, \bibinfo{person}{Ziv Scully},
  {and} \bibinfo{person}{Mor {Harchol-Balter}}.}
  \bibinfo{year}{2019}\natexlab{a}.
\newblock \showarticletitle{Load Balancing Guardrails: Keeping Your Heavy
  Traffic on the Road to Low Response Times}.
\newblock \bibinfo{journal}{\emph{Proceedings of the ACM on Measurement and
  Analysis of Computing Systems}} \bibinfo{volume}{3}, \bibinfo{number}{2},
  Article \bibinfo{articleno}{42} (\bibinfo{date}{June} \bibinfo{year}{2019}),
  \bibinfo{numpages}{31}~pages.
\newblock
\showISSN{2476-1249, 2476-1249}
\urldef\tempurl%
\url{https://doi.org/10.1145/3341617.3326157}
\showDOI{\tempurl}


\bibitem[\protect\citeauthoryear{Grosof, Scully, and {Harchol-Balter}}{Grosof
  et~al\mbox{.}}{2019b}]%
        {grosof_srpt_2019}
\bibfield{author}{\bibinfo{person}{Isaac Grosof}, \bibinfo{person}{Ziv Scully},
  {and} \bibinfo{person}{Mor {Harchol-Balter}}.}
  \bibinfo{year}{2019}\natexlab{b}.
\newblock \showarticletitle{{{SRPT}} for Multiserver Systems}.
\newblock \bibinfo{journal}{\emph{ACM SIGMETRICS Performance Evaluation
  Review}} \bibinfo{volume}{46}, \bibinfo{number}{2} (\bibinfo{date}{Jan.}
  \bibinfo{year}{2019}), \bibinfo{pages}{9--11}.
\newblock
\showISSN{0163-5999}
\urldef\tempurl%
\url{https://doi.org/10.1145/3305218.3305223}
\showDOI{\tempurl}


\bibitem[\protect\citeauthoryear{{Harchol-Balter}}{{Harchol-Balter}}{2013}]%
        {harchol-balter_performance_2013}
\bibfield{author}{\bibinfo{person}{Mor {Harchol-Balter}}.}
  \bibinfo{year}{2013}\natexlab{}.
\newblock \bibinfo{booktitle}{\emph{Performance Modeling and Design of Computer
  Systems: Queueing Theory in Action}}.
\newblock \bibinfo{publisher}{{Cambridge University Press}},
  \bibinfo{address}{{Cambridge, UK}}.
\newblock
\showISBNx{978-1-107-02750-3}
\showLCCN{QA76.545 .H37 2013}


\bibitem[\protect\citeauthoryear{{Harchol-Balter}, Schroeder, Bansal, and
  Agrawal}{{Harchol-Balter} et~al\mbox{.}}{2003}]%
        {harchol-balter_size-based_2003}
\bibfield{author}{\bibinfo{person}{Mor {Harchol-Balter}},
  \bibinfo{person}{Bianca Schroeder}, \bibinfo{person}{Nikhil Bansal}, {and}
  \bibinfo{person}{Mukesh Agrawal}.} \bibinfo{year}{2003}\natexlab{}.
\newblock \showarticletitle{Size-Based Scheduling to Improve Web Performance}.
\newblock \bibinfo{journal}{\emph{ACM Transactions on Computer Systems}}
  \bibinfo{volume}{21}, \bibinfo{number}{2} (\bibinfo{date}{May}
  \bibinfo{year}{2003}), \bibinfo{pages}{207--233}.
\newblock
\showISSN{0734-2071, 1557-7333}
\urldef\tempurl%
\url{https://doi.org/10.1145/762483.762486}
\showDOI{\tempurl}


\bibitem[\protect\citeauthoryear{Hyyti{\"a}, Aalto, and Penttinen}{Hyyti{\"a}
  et~al\mbox{.}}{2012}]%
        {hyytia_minimizing_2012}
\bibfield{author}{\bibinfo{person}{Esa Hyyti{\"a}}, \bibinfo{person}{Samuli
  Aalto}, {and} \bibinfo{person}{Aleksi Penttinen}.}
  \bibinfo{year}{2012}\natexlab{}.
\newblock \showarticletitle{Minimizing Slowdown in Heterogeneous Size-Aware
  Dispatching Systems}.
\newblock \bibinfo{journal}{\emph{ACM SIGMETRICS Performance Evaluation
  Review}} \bibinfo{volume}{40}, \bibinfo{number}{1} (\bibinfo{date}{June}
  \bibinfo{year}{2012}), \bibinfo{pages}{29--40}.
\newblock
\showISSN{0163-5999}
\urldef\tempurl%
\url{https://doi.org/10.1145/2318857.2254763}
\showDOI{\tempurl}


\bibitem[\protect\citeauthoryear{Kalyanasundaram and Pruhs}{Kalyanasundaram and
  Pruhs}{1997}]%
        {kalyanasundaram_minimizing_1997}
\bibfield{author}{\bibinfo{person}{Bala Kalyanasundaram} {and}
  \bibinfo{person}{Kirk~R. Pruhs}.} \bibinfo{year}{1997}\natexlab{}.
\newblock \showarticletitle{Minimizing Flow Time Nonclairvoyantly}. In
  \bibinfo{booktitle}{\emph{38th {{Annual Symposium}} on {{Foundations}} of
  {{Computer Science}}}}. \bibinfo{publisher}{{IEEE}}, \bibinfo{address}{{Miami
  Beach, FL, USA}}, \bibinfo{pages}{345--352}.
\newblock
\showISBNx{978-0-8186-8197-4}
\urldef\tempurl%
\url{https://doi.org/10.1109/SFCS.1997.646123}
\showDOI{\tempurl}


\bibitem[\protect\citeauthoryear{Kleinrock}{Kleinrock}{1976}]%
        {kleinrock_queueing_1976}
\bibfield{author}{\bibinfo{person}{Leonard Kleinrock}.}
  \bibinfo{year}{1976}\natexlab{}.
\newblock \bibinfo{booktitle}{\emph{Queueing Systems, Volume 2: Computer
  Applications}}.
\newblock \bibinfo{publisher}{{Wiley}}, \bibinfo{address}{{New York, NY}}.
\newblock
\showISBNx{978-0-471-49111-8}


\bibitem[\protect\citeauthoryear{Leonardi and Raz}{Leonardi and Raz}{2007}]%
        {leonardi_approximating_2007}
\bibfield{author}{\bibinfo{person}{Stefano Leonardi} {and}
  \bibinfo{person}{Danny Raz}.} \bibinfo{year}{2007}\natexlab{}.
\newblock \showarticletitle{Approximating Total Flow Time on Parallel
  Machines}.
\newblock \bibinfo{journal}{\emph{J. Comput. System Sci.}}
  \bibinfo{volume}{73}, \bibinfo{number}{6} (\bibinfo{date}{Sept.}
  \bibinfo{year}{2007}), \bibinfo{pages}{875--891}.
\newblock
\showISSN{00220000}
\urldef\tempurl%
\url{https://doi.org/10.1016/j.jcss.2006.10.018}
\showDOI{\tempurl}


\bibitem[\protect\citeauthoryear{Little}{Little}{2011}]%
        {little_littles_2011}
\bibfield{author}{\bibinfo{person}{John D.~C. Little}.}
  \bibinfo{year}{2011}\natexlab{}.
\newblock \showarticletitle{Little's Law as Viewed on Its 50th Anniversary}.
\newblock \bibinfo{journal}{\emph{Operations Research}} \bibinfo{volume}{59},
  \bibinfo{number}{3} (\bibinfo{date}{June} \bibinfo{year}{2011}),
  \bibinfo{pages}{536--549}.
\newblock
\showISSN{0030-364X, 1526-5463}
\urldef\tempurl%
\url{https://doi.org/10.1287/opre.1110.0940}
\showDOI{\tempurl}


\bibitem[\protect\citeauthoryear{Montazeri, Li, Alizadeh, and
  Ousterhout}{Montazeri et~al\mbox{.}}{2018}]%
        {montazeri_homa_2018}
\bibfield{author}{\bibinfo{person}{Behnam Montazeri}, \bibinfo{person}{Yilong
  Li}, \bibinfo{person}{Mohammad Alizadeh}, {and} \bibinfo{person}{John
  Ousterhout}.} \bibinfo{year}{2018}\natexlab{}.
\newblock \showarticletitle{Homa: A Receiver-Driven Low-Latency Transport
  Protocol Using Network Priorities}. In \bibinfo{booktitle}{\emph{Proceedings
  of the 2018 {{Conference}} of the {{ACM Special Interest Group}} on {{Data
  Communication}} ({{SIGCOMM}} 2018)}}. \bibinfo{publisher}{{ACM}},
  \bibinfo{address}{{Budapest, Hungary}}, \bibinfo{pages}{221--235}.
\newblock
\showISBNx{978-1-4503-5567-4}
\urldef\tempurl%
\url{https://doi.org/10.1145/3230543.3230564}
\showDOI{\tempurl}


\bibitem[\protect\citeauthoryear{Nicola and {van Spanje}}{Nicola and {van
  Spanje}}{1990}]%
        {nicola_comparative_1990}
\bibfield{author}{\bibinfo{person}{Victor~F. Nicola} {and}
  \bibinfo{person}{J.~M. {van Spanje}}.} \bibinfo{year}{1990}\natexlab{}.
\newblock \showarticletitle{Comparative Analysis of Different Models of
  Checkpointing and Recovery}.
\newblock \bibinfo{journal}{\emph{IEEE Transactions on Software Engineering}}
  \bibinfo{volume}{16}, \bibinfo{number}{8} (\bibinfo{date}{Aug.}
  \bibinfo{year}{1990}), \bibinfo{pages}{807--821}.
\newblock
\showISSN{00985589}
\urldef\tempurl%
\url{https://doi.org/10.1109/32.57620}
\showDOI{\tempurl}


\bibitem[\protect\citeauthoryear{Pinedo}{Pinedo}{2016}]%
        {pinedo_scheduling_2016}
\bibfield{author}{\bibinfo{person}{Michael Pinedo}.}
  \bibinfo{year}{2016}\natexlab{}.
\newblock \bibinfo{booktitle}{\emph{Scheduling: Theory, Algorithms, and
  Systems} (\bibinfo{edition}{fifth} ed.)}.
\newblock \bibinfo{publisher}{{Springer}}, \bibinfo{address}{{Cham,
  Switzerland}}.
\newblock
\showISBNx{978-3-319-26578-0}


\bibitem[\protect\citeauthoryear{Schrage}{Schrage}{1968}]%
        {schrage_proof_1968}
\bibfield{author}{\bibinfo{person}{Linus~E. Schrage}.}
  \bibinfo{year}{1968}\natexlab{}.
\newblock \showarticletitle{A Proof of the Optimality of the Shortest Remaining
  Processing Time Discipline}.
\newblock \bibinfo{journal}{\emph{Operations Research}} \bibinfo{volume}{16},
  \bibinfo{number}{3} (\bibinfo{date}{June} \bibinfo{year}{1968}),
  \bibinfo{pages}{687--690}.
\newblock
\showISSN{0030-364X, 1526-5463}
\urldef\tempurl%
\url{https://doi.org/10.1287/opre.16.3.687}
\showDOI{\tempurl}


\bibitem[\protect\citeauthoryear{Scully, Grosof, and {Harchol-Balter}}{Scully
  et~al\mbox{.}}{2020a}]%
        {scully_gittins_2020}
\bibfield{author}{\bibinfo{person}{Ziv Scully}, \bibinfo{person}{Isaac Grosof},
  {and} \bibinfo{person}{Mor {Harchol-Balter}}.}
  \bibinfo{year}{2020}\natexlab{a}.
\newblock \showarticletitle{The {{Gittins}} Policy Is Nearly Optimal in the
  {{M}}/{{G}}/k under Extremely General Conditions}.
\newblock \bibinfo{journal}{\emph{Proceedings of the ACM on Measurement and
  Analysis of Computing Systems}} \bibinfo{volume}{4}, \bibinfo{number}{3},
  Article \bibinfo{articleno}{43} (\bibinfo{date}{Nov.} \bibinfo{year}{2020}),
  \bibinfo{numpages}{29}~pages.
\newblock
\showISSN{2476-1249, 2476-1249}
\urldef\tempurl%
\url{https://doi.org/10.1145/3428328}
\showDOI{\tempurl}


\bibitem[\protect\citeauthoryear{Scully, Grosof, and {Harchol-Balter}}{Scully
  et~al\mbox{.}}{2021}]%
        {scully_optimal_2021}
\bibfield{author}{\bibinfo{person}{Ziv Scully}, \bibinfo{person}{Isaac Grosof},
  {and} \bibinfo{person}{Mor {Harchol-Balter}}.}
  \bibinfo{year}{2021}\natexlab{}.
\newblock \showarticletitle{Optimal Multiserver Scheduling with Unknown Job
  Sizes in Heavy Traffic}.
\newblock \bibinfo{journal}{\emph{Performance Evaluation}}
  \bibinfo{volume}{145}, Article \bibinfo{articleno}{102150}
  (\bibinfo{date}{Jan.} \bibinfo{year}{2021}), \bibinfo{numpages}{31}~pages.
\newblock
\showISSN{01665316}
\urldef\tempurl%
\url{https://doi.org/10.1016/j.peva.2020.102150}
\showDOI{\tempurl}


\bibitem[\protect\citeauthoryear{Scully, {Harchol-Balter}, and
  {Scheller-Wolf}}{Scully et~al\mbox{.}}{2018}]%
        {scully_soap_2018}
\bibfield{author}{\bibinfo{person}{Ziv Scully}, \bibinfo{person}{Mor
  {Harchol-Balter}}, {and} \bibinfo{person}{Alan {Scheller-Wolf}}.}
  \bibinfo{year}{2018}\natexlab{}.
\newblock \showarticletitle{{{SOAP}}: One Clean Analysis of All Age-Based
  Scheduling Policies}.
\newblock \bibinfo{journal}{\emph{Proceedings of the ACM on Measurement and
  Analysis of Computing Systems}} \bibinfo{volume}{2}, \bibinfo{number}{1},
  Article \bibinfo{articleno}{16} (\bibinfo{date}{April} \bibinfo{year}{2018}),
  \bibinfo{numpages}{30}~pages.
\newblock
\showISSN{2476-1249, 2476-1249}
\urldef\tempurl%
\url{https://doi.org/10.1145/3179419}
\showDOI{\tempurl}


\bibitem[\protect\citeauthoryear{Scully, {Harchol-Balter}, and
  {Scheller-Wolf}}{Scully et~al\mbox{.}}{2020b}]%
        {scully_simple_2020}
\bibfield{author}{\bibinfo{person}{Ziv Scully}, \bibinfo{person}{Mor
  {Harchol-Balter}}, {and} \bibinfo{person}{Alan {Scheller-Wolf}}.}
  \bibinfo{year}{2020}\natexlab{b}.
\newblock \showarticletitle{Simple Near-Optimal Scheduling for the
  {{M}}/{{G}}/1}.
\newblock \bibinfo{journal}{\emph{Proceedings of the ACM on Measurement and
  Analysis of Computing Systems}} \bibinfo{volume}{4}, \bibinfo{number}{1},
  Article \bibinfo{articleno}{11} (\bibinfo{date}{May} \bibinfo{year}{2020}),
  \bibinfo{numpages}{29}~pages.
\newblock
\showISSN{2476-1249, 2476-1249}
\urldef\tempurl%
\url{https://doi.org/10.1145/3379477}
\showDOI{\tempurl}


\bibitem[\protect\citeauthoryear{Wierman}{Wierman}{2007}]%
        {wierman_scheduling_2007}
\bibfield{author}{\bibinfo{person}{Adam Wierman}.}
  \bibinfo{year}{2007}\natexlab{}.
\newblock \emph{\bibinfo{title}{Scheduling for Today's Computer Systems:
  Bridging Theory and Practice}}.
\newblock \bibinfo{thesistype}{Ph.D. Dissertation}. \bibinfo{school}{Carnegie
  Mellon University}, \bibinfo{address}{{Pittsburgh, PA}}.
\newblock
\urldef\tempurl%
\url{https://adamwierman.com/wp-content/uploads/2020/09/thesis.pdf}
\showURL{%
\tempurl}


\bibitem[\protect\citeauthoryear{Zhan, Liu, Gong, Zhang, Chung, and Li}{Zhan
  et~al\mbox{.}}{2015}]%
        {zhan_cloud_2015}
\bibfield{author}{\bibinfo{person}{Zhi-Hui Zhan}, \bibinfo{person}{Xiao-Fang
  Liu}, \bibinfo{person}{Yue-Jiao Gong}, \bibinfo{person}{Jun Zhang},
  \bibinfo{person}{Henry Shu-Hung Chung}, {and} \bibinfo{person}{Yun Li}.}
  \bibinfo{year}{2015}\natexlab{}.
\newblock \showarticletitle{Cloud Computing Resource Scheduling and a Survey of
  Its Evolutionary Approaches}.
\newblock \bibinfo{journal}{\emph{Comput. Surveys}} \bibinfo{volume}{47},
  \bibinfo{number}{4}, Article \bibinfo{articleno}{63} (\bibinfo{date}{July}
  \bibinfo{year}{2015}), \bibinfo{numpages}{33}~pages.
\newblock
\showISSN{0360-0300, 1557-7341}
\urldef\tempurl%
\url{https://doi.org/10.1145/2788397}
\showDOI{\tempurl}


\end{thebibliography}

\appendix

\section{Performance of FCFS with Bounded Job Sizes}
\label{sec:fcfs_approximation}

\begin{theorem}
  Consider an M/G/1 whose job sizes all lie within an interval $[s_{\min}, s_{\max}]$.
  Then the mean response time of FCFS is at most a factor of $s_{\max}/s_{\min}\esub$
  worse than that of SRPT:
  \begin{align*}
    \E{T_{\text{FCFS}}} \leq \frac{s_{\max}}{s_{\min}} \E{T_{\text{SRPT}}}.
  \end{align*}
\end{theorem}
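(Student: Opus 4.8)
The plan is to pass from response times to time-stationary numbers of jobs in system via Little's law, since $\E{T_{\text{FCFS}}} = \E{N_{\text{FCFS}}}/\lambda$ and $\E{T_{\text{SRPT}}} = \E{N_{\text{SRPT}}}/\lambda$; it then suffices to show $\E{N_{\text{FCFS}}} \le (s_{\max}/s_{\min})\,\E{N_{\text{SRPT}}}$. The object that connects the two policies is the \emph{workload} $V$, the total remaining work in system, which evolves along the same sample path under every work-conserving policy and so has the same stationary law under FCFS and SRPT.

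I would first bound $N_{\text{FCFS}}$ from above. At a busy instant there is one job in service with remaining work $r \le s_{\max}$, and each of the other $N_{\text{FCFS}} - 1$ jobs is still untouched and hence carries work $\ge s_{\min}$; therefore $V \ge r + (N_{\text{FCFS}} - 1)s_{\min}$, that is $N_{\text{FCFS}} \le (V - r)/s_{\min} + 1$, with $N_{\text{FCFS}} = 0$ when idle. Taking stationary expectations gives $\E{N_{\text{FCFS}}} \le (\E{V} - \E{r})/s_{\min} + \rho$. Symmetrically, under SRPT every job's remaining work is at most $s_{\max}$, so at a busy instant $V \le r' + (N_{\text{SRPT}} - 1)s_{\max}$ where $r'$ is the remaining work of the SRPT job in service, whence $\E{N_{\text{SRPT}}} \ge (\E{V} - \E{r'})/s_{\max} + \rho$.

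The key point is that $\E{r} = \E{r'}$. This is because the long-run time-average of ``remaining work of the job in service'' does not depend on the scheduling policy: a job of size $p$ occupies the server for exactly $p$ time units, during which its remaining work falls linearly from $p$ to $0$, contributing $p^2/2$ to the time integral irrespective of when those units of service occur; hence both averages equal $\lambda\,\E{S^2}/2$. Writing $a$ for this common value, the two bounds read $\E{N_{\text{FCFS}}} \le (\E{V} - a)/s_{\min} + \rho$ and $\E{N_{\text{SRPT}}} \ge (\E{V} - a)/s_{\max} + \rho$. Multiplying the second by $s_{\max}/s_{\min} \ge 1$ gives $(s_{\max}/s_{\min})\E{N_{\text{SRPT}}} \ge (\E{V} - a)/s_{\min} + (s_{\max}/s_{\min})\rho \ge (\E{V} - a)/s_{\min} + \rho \ge \E{N_{\text{FCFS}}}$, and dividing through by $\lambda$ yields the theorem.

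The main obstacle is handling the job currently in service correctly. The naive estimates $N_{\text{FCFS}} \le V/s_{\min} + 1$ and $N_{\text{SRPT}} \ge V/s_{\max}$ only deliver the weaker bound $\E{T_{\text{FCFS}}} \le (s_{\max}/s_{\min} + 1)\,\E{T_{\text{SRPT}}}$; the surplus factor is precisely the contribution of the job in service, and to remove it one must keep the in-service remaining work $r$ (resp.\ $r'$) in \emph{both} inequalities and exploit the fact that these terms coincide in expectation and therefore cancel. Everything else is routine: the policy-independence of the in-service time-average is a renewal--reward computation over busy cycles, the passage from pointwise-in-$t$ inequalities to stationary expectations uses standard ergodic facts for the stable M/G/1 (in particular that the server is busy a fraction $\rho$ of the time), and the closing arithmetic is elementary.
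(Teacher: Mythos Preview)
Your argument is correct and is essentially the paper's proof, just with different bookkeeping: the paper decomposes $N = L + M$ (queue plus in-service) and $W = U + V$ (queue work plus in-service work), then shows $\E{L_{\text{FCFS}}} \le (s_{\max}/s_{\min})\E{L_{\text{SRPT}}}$ via $\E{U_{\text{FCFS}}} = \E{U_{\text{SRPT}}}$, whereas you keep $N$ and the total work together and subtract off the in-service remaining work $r$; your key observation $\E{r} = \E{r'} = \lambda\E{S^2}/2$ is exactly the paper's statement that $\E{V}$ is policy-invariant, and your final arithmetic is the same inequality rearranged.
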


\begin{proof}
  Little's law \citep{little_littles_2011},
  relates the mean number of jobs in the system $\E{N}$
  to mean response time $\E{T}$ and the arrival rate~$\lambda$:
  \begin{align*}
    \E{N} = \lambda \E{T}.
  \end{align*}
  It thus suffices to show
  \begin{align}
    \label{eq:N_comparison}
    \E{N_{\text{FCFS}}} \leq \frac{s_{\max}}{s_{\min}} \E{N_{\text{SRPT}}}.
  \end{align}

  To prove~\cref{eq:N_comparison},
  we further split the number of jobs in the system~$N$
  into two parts:
  \begin{itemize}
  \item
    the number of jobs in the queue, which we write as~$L$, and
  \item
    the number of jobs in service, which we write as~$M$.
  \end{itemize}
  In an M/G/1,
  the expected number of jobs in service is the fraction of time the system is busy,
  which is $\rho$ under either scheduling policy:
  \begin{align*}
    \E{M_{\text{FCFS}}} = \E{M_{\text{SRPT}}} = \rho.
  \end{align*}
  Because $N = L + M$, it suffices to show
  \begin{align}
    \label{eq:L_comparison}
    \E{L_{\text{FCFS}}} \leq \frac{s_{\max}}{s_{\min}} \E{L_{\text{SRPT}}}.
  \end{align}

  In order to show~\cref{eq:L_comparison},
  we introduce another new concept:
  the \emph{work}, or total remaining size, of a group of jobs.
  We write
  \begin{itemize}
  \item
    $U$ for the work of jobs in the queue,
  \item
    $V$ for the work of jobs currently in service,\footnote{%
      Note that there are either 0 or 1 jobs in service at any time.} and
  \item
    $W = U + V$ for the total work of all jobs.
  \end{itemize}
  It is simple to show that $\E{V}$ has the same value for all scheduling policies
  \citep[Chapter~23]{harchol-balter_performance_2013},
  so $\E{V_{\text{FCFS}}} = \E{V_{\text{SRPT}}}$.
  Furthermore, $\E{W}$ has the same value for all scheduling policies
  that never leave the server idle while there are still jobs in the system,
  so $\E{W_{\text{FCFS}}} = \E{W_{\text{SRPT}}}$.
  Together, these two facts imply
  \begin{align}
    \label{eq:U_comparison}
    \E{U_{\text{FCFS}}} = \E{U_{\text{SRPT}}}.
  \end{align}

  The final step is to use \cref{eq:U_comparison} to prove~\cref{eq:L_comparison}.
  Because all jobs in the queue under FCFS have not started service yet,
  each has remaining size at least~$s_{\min}$, so
  \begin{align*}
    \E{U_{\text{FCFS}}} \geq s_{\min} \E{L_{\text{FCFS}}}.
  \end{align*}
  Similarly, every job has remaining size at most~$s_{\max}$, so
  \begin{align*}
    \E{U_{\text{SRPT}}} \leq s_{\max} \E{L_{\text{SRPT}}}.
  \end{align*}
  Combining \cref{eq:U_comparison} with the above inequalities
  yields~\cref{eq:L_comparison}, as desired.
\end{proof}

\section{Guaranteeing Stability with a Safe Checkpoint Age Gap}
\label{sec:delta_safe}

See \cref{sec:pc:model} for a full description of
the M/G/1 with preemption checkpoints.

\begin{theorem}
  Consider an M/G/1 with preemption checkpoints
  with overhead time~$\gamma$ per checkpoint
  and age gap~$\delta$ between checkpoints.
  Let
  \begin{align*}
    \delta_{\text{safe}} = \frac{\gamma\rho}{1 - \rho}.
  \end{align*}
  If $\delta > \delta_{\text{safe}}\esub$,
  then the system is stable.
\end{theorem}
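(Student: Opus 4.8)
The plan is to reduce the M/G/1 with preemption checkpoints to an ordinary M/G/1 with an inflated job size distribution, and then apply the textbook M/G/1 stability condition.

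First I would note that the overhead a job accrues is a deterministic function of its size alone, independent of the scheduling policy. A job of size~$s$ runs until its age reaches~$s$, passing through a checkpoint at each age $j\delta$ with $j\delta < s$; there are $\lceil s/\delta \rceil - 1 \le s/\delta$ such checkpoints, and this count is unaffected by how the job's service is interleaved with others', since overhead time does not count toward age. Thus the total server time ultimately demanded by a size-$s$ job --- its \emph{effective size} $g(s) = s + \gamma(\lceil s/\delta\rceil - 1)$ --- is bounded by $g(s) \le s(1 + \gamma/\delta)$.

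Next I would track the total effective work $W(t)$ in the system, meaning the sum over all jobs present of their remaining size plus all overheads they have yet to pay. The upward jumps of $W$ at arrival epochs are the i.i.d.\ quantities $g(S_i)$, which do not depend on the policy; and because every policy in our rank-function framework is work-conserving --- ``serve the job of minimum rank'' always serves \emph{some} job whenever one is present, and the server is likewise occupied during overhead --- the quantity $W$ decreases at rate~$1$ whenever $W(t) > 0$. Hence $W(\cdot)$ is precisely the workload process of an M/G/1 queue with arrival rate~$\lambda$ and service requirement~$g(S)$, so by the classical M/G/1 stability criterion the system is stable whenever the effective load $\lambda\,\E{g(S)}$ is strictly below~$1$.

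Finally the arithmetic closes the argument: $\lambda\,\E{g(S)} \le \lambda\,\E{S}\,(1 + \gamma/\delta) = \rho(1 + \gamma/\delta)$, and $\rho(1 + \gamma/\delta) < 1$ rearranges to $\rho\gamma/\delta < 1 - \rho$, i.e.\ $\delta > \gamma\rho/(1-\rho) = \delta_{\text{safe}}$. I do not anticipate a real obstacle; the only point requiring care is articulating clearly that $W(t)$ is genuinely independent of the scheduling policy, so that the reduction to a standard M/G/1 is valid uniformly over all policies. (If one wanted the stronger conclusion $\E{T} < \infty$ rather than mere positive recurrence, one would additionally need $\E{S^2} < \infty$; for the notion of stability used here, $\E{S} < \infty$ --- already implied by $\rho < 1$ --- suffices.)
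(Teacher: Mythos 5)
Your proposal is correct and follows essentially the same route as the paper: reduce the checkpointed system to an M/G/1 with an inflated job size $S + (\text{number of checkpoints})\cdot\gamma$, bound the inflated load by $\rho(1 + \gamma/\delta)$, and observe this is below $1$ exactly when $\delta > \delta_{\text{safe}}$. Your version merely adds a more careful justification of the reduction via work conservation (and counts checkpoints with $\lceil s/\delta\rceil - 1$ rather than the paper's $\lfloor s/\delta\rfloor$, which is immaterial since both are at most $s/\delta$).
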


\begin{proof}
  Let $S$ be the job size distribution
  before accounting for checkpoint overhead.
  The system with checkpoints is effectively
  an M/G/1 with a different job size distribution~$S'$,
  defined as
  \begin{align*}
    S' = S + \floor*{\frac{S}{\delta}} \gamma.
  \end{align*}
  That is, $S'$ accounts for both a job's size
  and all of the checkpoint overhead it incurs.
  The true load of the system after accounting for overhead
  is $\rho' = \lambda \E{S'}$,
  so the system is stable if $\rho' < 1$.
  We compute
  \begin{align*}
    \rho'
    = \lambda \E*{S + \floor*{\frac{S}{\delta}} \gamma}
    \leq \lambda \E*{S + \frac{S}{\delta} \gamma}
    = \rho \gp*{1 + \frac{\gamma}{\delta}},
  \end{align*}
  which is less than~$1$ if $\delta > \delta_{\text{safe}}$.
\end{proof}

\end{document}